\theoremstyle{plain}
\newtheorem{thm}{Theorem}[section]
\newtheorem{lem}[thm]{Lemma}
\newtheorem{prop}[thm]{Proposition}
\theoremstyle{definition}
\newtheorem{rem}[thm]{Remark}
\theoremstyle{remark}
\def\F{{\cal F}}
\def\Fh{\hat{\cal F}}
\newcommand{\PRE}{Phys. Rev. E.}
\newcommand{\PRL}{Phys. Rev. Lett.}
\newcommand{\be}{\begin{equation}}
\newcommand{\ee}{\end{equation}}
\newcommand{\bea}{\begin{eqnarray}}
\newcommand{\eea}{\end{eqnarray}}
\renewcommand{\atop}[2]{\genfrac{}{}{0pt}{}{#1}{#2}}
\newcommand{\cthm}[1]{Theorem~\ref{#1}}
\newcommand{\cprop}[1]{Proposition~\ref{#1}}
\newcommand{\clem}[1]{Lemma~\ref{#1}}
\newcommand{\crem}[1]{Remark~\ref{#1}}
\newcommand{\E}{{\mathcal E}}
\newcommand{\zetau}{{\underline\zeta}}
\newcommand{\xiu}{{\underline\xi}}
\renewcommand{\S}{{\mathcal S}}
\newcommand{\mbond}{\hbox{---}\llap{$\times\, $}\ }
\newcommand{\bond}{\hbox{---}\ }
\newcommand\bN 
\newcommand\bZ{{\mathbb Z}}
\newenvironment{proofof}[1]{\medskip\noindent
   \textit{Proof of #1:} }{\hfill \qed\par\medskip}
\newenvironment{proofof*}[1]{\medskip\noindent
   \textit{Proof of #1:} }{}
\numberwithin{equation}{section} 
\def\line{\hbox to\hsize}
\title{The grand canonical ABC model: a reflection asymmetric  mean field
 Potts model}
\author{{ J. Barton$^{1}$, J. L. Lebowitz${}^{1,2}$,
and E. R. Speer${}^2$}\\ \\
{\small $^1$ Department of Physics, Rutgers University,}\\
{\small Piscataway, NJ 08854 USA}\\ 
{\small $^2$ Department of Mathematics, Rutgers University,}\\
{\small Piscataway, NJ 08854-8019 USA}}
\begin{document}

\maketitle

\begin{abstract} We investigate the phase diagram of a three-component
system of particles on a one-dimensional filled lattice, or equivalently of
a one-dimensional three-state Potts model, with reflection asymmetric mean
field interactions.  The three types of particles are designated as $A$,
$B$, and $C$.  The system is described by a grand canonical ensemble with
temperature $T$ and chemical potentials $T\lambda_A$, $T\lambda_B$, and
$T\lambda_C$.  We find that for $\lambda_A=\lambda_B=\lambda_C$ the system
undergoes a phase transition from a uniform density to a continuum of
phases at a critical temperature $\hat T_c=(2\pi/\sqrt3)^{-1}$.  For other
values of the chemical potentials the system has a unique equilibrium
state.  As is the case for the canonical ensemble for this $ABC$ model, the
grand canonical ensemble is the stationary measure satisfying detailed
balance for a natural dynamics.  We note that $\hat T_c=3T_c$, where $T_c$
is the critical temperature for a similar transition in the canonical
ensemble at fixed equal densities $r_A=r_B=r_C=1/3$. \end{abstract}

\noindent
{\bf Keywords:} ABC model, grand canonical ensemble, reflection asymmetric
mean field three-state Potts model

\section{Introduction\label{sec:intro}}

In this paper we study the phase diagram of the three species ABC model on
an interval as a function of the chemical potentials and the temperature.
The system is defined microscopically on a lattice of $N$ sites in which
each site is occupied by either an $A$, a $B$, or a $C$ particle.  
The energy is of mean field type, with an
interaction which has cyclic symmetry in the particle types but is
reflection asymmetric:
\be
E_N(\zetau)  = {\frac{1}{N}} \sum^{N-1}_{i=1}
\sum^{N}_{j=i+1} [ \eta_C (i) \eta_B(j) + \eta_A (i) \eta_C(j) +
\eta_B(i) \eta_A (j)]. \label{eq:EN}
\ee
 Here the configuration $\zetau$ of the model is an $N$-tuple
$(\zeta_1,\ldots,\zeta_N)$, with $\zeta_i=A$, $B$, or $C$, and
$\eta_\alpha(i)$, $\alpha=A,B,C$, is a random variable which specifies
whether a particle of species $\alpha$ is present at site $i$:
$\eta_\alpha(i)=1$ if $\zeta_i=\alpha$ and $\eta_\alpha(i)=0$ otherwise, so
that always $\sum_\alpha\eta_i(\alpha)=1$.  

We remark that we may also regard the model as a reflection asymmetric mean
field three state Potts model.  The asymmetry of the interaction, however,
gives this model system very different behavior from that of the usual
symmetric mean field model \cite{SF}.  Similar but short-range (in fact,
nearest neighbor) reflection asymmetric interactions occur in chiral clock
models \cite{O,H,AP}; see Remark~\ref{rem:chiral} below.

The equilibrium probability of a configuration $\zetau$ 
is given by the grand canonical Gibbs measure
\be
 \mu_{\beta,\lambda}(\zetau) 
  = \Xi^{-1} \exp [-\beta E_N(\zetau)
  +\sum_\alpha \lambda_\alpha N_\alpha(\zeta)], \label{eq:mu}
\ee
 where $\beta$ is the inverse temperature, $\lambda_A$, $\lambda_B$, and
$\lambda_C$ are $\beta$ times the chemical potentials,
$N_\alpha=\sum^N_{i=1} \eta_{\alpha}(i)$ with $\sum_\alpha N_\alpha=N$, and
$\Xi$ is the usual grand canonical partition function.  We prove here that
in the scaling limit ($N\to\infty$, $i/N\to x\in[0,1]$) the equilibrium
density profiles $\rho(x)$ are unique and spatially nonuniform when the
$\lambda_\alpha$'s are not all the same.  When
$\lambda_A=\lambda_B=\lambda_C$ the densities are spatially uniform above a
critical temperature $\hat T_c=\hat\beta_c^{-1}$, with
$\hat\beta_c=2\pi/\sqrt3$; below $\hat T_c$ the profiles have a natural
extension to periodic functions with a period three times the length of the
system.

One may compare the behavior described above with that of the same system
in the canonical ensemble, in which the $N_\alpha$ are taken as fixed; this
is the only case considered previously.  The results are quite different,
that is, we have inequivalence of ensembles (see \cite{CDR,M} for recent
reviews).  We give in Section~2 a brief history of the ABC model with fixed
particle number and a summary of results for that system.  In Section~3 we
describe a stochastic evolution satisfying detailed balance with respect to
the measures $\mu_{\beta,\lambda}(\zetau)$, and in Section~4 we establish
the phase diagram.  Section~5 gives a discussion of some related models and
problems.

\section{The ABC model in the canonical ensemble\label{sec:canonical}}

The ABC model was introduced by Evans et al.~\cite{Evans98} (see also
\cite{EKKM2,CDE,BDLvW,BDGJ,FF,FF2,ACLMMS}) as a one dimensional system
consisting of three species of particles, labeled $A, B, C$, on a ring
containing $N$ lattice sites; we will typically let $\alpha=A$, $B$, or $C$
denote a particle type, and make the convention that $\alpha+1$,
$\alpha+2$, $\ldots$ denote the particle types which are successors to
$\alpha$ in the cyclic order $ABC$.  The system evolves by nearest neighbor
exchanges with asymmetric rates: if sites $i$ and $i+1$ are occupied by
particles of different types $\alpha$ and $\gamma$, respectively, then the
exchange $\alpha\,\gamma\to\gamma\,\alpha$ occurs at rate $q<1$ if
$\gamma=\alpha+1$ and at rate $1$ if $\gamma=\alpha-1$.  The total numbers
$N_\alpha$ of particles of each species are conserved and satisfy
$\sum_\alpha N_\alpha=N$.  In the limit $N \to \infty$ with
$N_\alpha/N\to r_\alpha$, where $r_\alpha>0$ for all $\alpha$, the system
segregates into pure $A$, $B$, and $C$ regions, with rotationally invariant
distribution of the phase boundaries.

In the weakly asymmetric version of the system introduced by Clincy et
al.~\cite{CDE}, in which $q = e^{-\beta/N}$, the stationary state for the
equal density case $N_A=N_B=N_C$ is a Gibbs measure of the form
$\exp \{-\beta E_N\}$, so that the parameter $\beta=T^{-1}$ plays the
role of an inverse temperature.  The energy $E_N$ is given by
\eqref{eq:EN}, and the condition $N_A = N_B = N_C$ ensures that this is
translation invariant, despite the appearance of a preferred starting site
for the summations.

Ayyer et.~al.~\cite{ACLMMS} studied the weakly asymmetric system on an
interval, that is, again on a one-dimensional lattice of $N$ sites but now
with zero flux boundary conditions, so that a particle at site $i=1$
(respectively $i=N$) can only jump to the right (respectively left).  For
this system the steady state is {\it always} Gibbsian, given by
$\exp\{-\beta E_N\}$ with $E_N$ as in \eqref{eq:EN}, whatever the values of
$N_A$, $N_B$, and $N_C$.  When $N_A=N_B=N_C$ the steady state of the system
thus agrees with that on the ring, so that the invariance under rotations
on the ring then implies a rather surprising ``rotation'' invariance of the
Gibbs state on the interval.  We describe the results of \cite{ACLMMS} in
some detail, since the work of the  current paper depends heavily on
them. 

 To identify typical coarse-grained density profiles at large $N$,
\cite{ACLMMS} considers the scaling limit
 \be\label{eq:scale}
   N\to\infty,\qquad i/N\to x,\quad x\in[0,1]. 
 \ee
 For this limit there exists a Helmholtz free energy functional
$\beta^{-1}\F(\{n\})$ of the density profile $n(x)=(n_A(x),n_B(x),n_C(x))$.
$\F$ is the difference of contributions from the energy and entropy:
 \be
 \F(\{n\})=\beta \E(\{n\}) -\S(\{n\}), \label{eq:F}
 \ee
 where $\E\left(\{n\}\right) $ and $\S\left(\{n\}\right) $ are the limiting
values of the energy and entropy per site:
 \bea
\E\left(\{n\}\right) &=&\int_0^1 \,dx
\int_x^1\,dz  \sum_\alpha n_\alpha(x)n_{\alpha+2}(z),\label{eq:E}\\
\S(\{n\})&=&-\int_0^1 dx \sum_\alpha n_{\alpha}(x) \ln n_{\alpha}(x).
\label{eq:S}
 \eea
 We will write $\F=\F^{(\beta)}$ when we need to indicate explicitly the
$\beta$ dependence.  Only the canonical ensemble was considered in
\cite{ACLMMS}, so that for some fixed positive mean densities $r_A$, $r_B$,
$r_C$ satisfying $r_A+r_B+r_C=1$ the profiles $n(x)$ in
\eqref{eq:F}--\eqref{eq:S} satisfy the conditions
 \be   \label{eq:constraints}
0\le n_\alpha(x)\le1,\quad \sum_\alpha n_\alpha(x) = 1,
  \quad\hbox{and}\quad
  \int^1_0 n_\alpha(x) dx = r_\alpha.
 \ee
The typical profiles in the scaling limit are those which
minimize $\F$; it was shown in \cite{ACLMMS} that such minimizers always
exist and satisfy the ELE derived from $\F$. To obtain the ELE one defines
\begin{align}\label{eq:FA}
  \F_\alpha(x) &= \frac{\delta\F}{\delta n_\alpha(x)}\nonumber\\
  &=\log n_\alpha(x) 
   + \beta \int_0^x[n_{\alpha+1}(z)-n_{\alpha+2}(z)]\,dz+1+\beta r_{\alpha+2}
\end{align}
 to be the variational derivative taken as if the profiles $n_A(x)$,
$n_B(x)$, and $n_C(x)$ were independent; the constraints
\eqref{eq:constraints} then imply that at a stationary point of $\F$ both
$\F_A-\F_C$ and $\F_B-\F_C$ are constant.  After simple manipulations (see
also Section~\ref{sec:scaling} below) this yields the ELE satisfied by the
typical profiles $\rho(x)$:
  \be\label{eq:ABC}
 \frac{d\rho_\alpha}{dx} =  
  \beta\rho_\alpha ( \rho_{\alpha-1}- \rho_{\alpha+1}),\qquad \alpha=A,B,C.
 \ee
 These are to be solved subject to \eqref{eq:constraints} (written in terms
of $\rho$ rather than $n$).

It follows from \eqref{eq:ABC} that all relevant solutions satisfy
$\prod_\alpha\rho_\alpha(x)=K$ for some constant $K$ with $0<K\le 1/27$.
For $K=1/27$ they are constant, with value $1/3$; for $K<1/27$ they have
the form
 \be\label{eq:solns}
\rho_\alpha(x)=y_K(2\beta(x-1/2) + t_\alpha), \qquad 0\le x\le 1,
 \ee
 with $y_K(t)$ a solution, periodic with period $\tau_K$, of the equation
 \be\label{eq:osc}
 \frac12{y'}^2+\frac12Ky-\frac18y^2(1-y)^2=0;
 \ee
  here $t=2\beta x+\rm constant$. $y_K$ is uniquely specified by requiring
that it take on its minimum value at the points $t=n\tau_K$, $n\in\bZ$.
The phase shifts $t_\alpha$ in \eqref{eq:solns} satisfy
 \be\label{eq:phases}
   t_A=t_B+\tau_K/3 \qquad\hbox{and}\qquad t_C=t_B-\tau_K/3.
 \ee

\begin{rem} \label{rem:yk} Equation \eqref{eq:osc} describes a particle of
unit mass and zero energy oscillating in a potential
$U_K(y)=Ky/2-y^2(1-y)^2/8$.  The constant solution $y=1/3$ appears for
$K=1/27$.  For $K<1/27$, $y_K(t)$ is an even function which is strictly
increasing on the interval $[0,\tau_K/2]$; it was shown in \cite{ACLMMS}
that $\tau_K$ is a strictly decreasing function of $K$.  Because the
potential is quartic in $y$ the solution is an elliptic function.  Further
properties of the function $y_K$ are summarized in \cprop{recall} of
Appendix~\ref{sec:proof}.  \end{rem}

Equation \eqref{eq:solns} indicates that nonconstant solutions of the ELE
are obtained by viewing $y_K(t)$, and its translates by one-third and
two-thirds of a period, in some ``window'' of length $2\beta$.  If one is
given $\beta$ and $r=(r_A,r_B,r_C)$ then one must determine $K$ and one of
the phase shifts, say $t_B$, so that
 \be\label{eq:ra}
r_\alpha=\frac1{2\beta}\int_{-\beta}^\beta
y_K(t+t_\alpha)\,dt,\qquad \alpha=A,B,C. 
 \ee
  The solutions which minimize $\F$ were completely determined in
\cite{ACLMMS}.  In stating the result, we use the following terminology: a
solution is {\it of type $n$} if $(n-1)\tau_K<2\beta\le n\tau_K$, that is,
if the window contains more than $n-1$ and at most $n$ periods of the
function $y_K$.

 \begin{thm}\label{thm:oldmain} Suppose that $r_A$, $r_B$, and $r_C$ are
strictly positive. Then:

\smallskip\noindent
  (a) If $r_A=r_B=r_C=1/3$ then for the equations \eqref{eq:ABC} with
\eqref{eq:constraints} there exist (i)~the constant solution, (ii)~for
$\beta>n\beta_c=2\pi n\sqrt3$, $n=1,2,\ldots$, a family of solutions, of
period $\tau_K=2\beta/n$ and hence of type $n$, differing by translation,
and (iii)~no other solutions.  The minimizers of the free energy are, for
$\beta\le\beta_c$, the (unique) constant solution and, for $\beta>\beta_c$,
any type~1 solution.

\smallskip\noindent
 (b) For values of $r$ other than $(1/3,1/3,1/3)$ there exists for all
$\beta$ a unique type~1 solution of these equations which is a minimizer of
the free energy. 

\smallskip\noindent (c) At zero temperature ($\beta\to\infty$) the system
segregates into either three or four blocks, each containing 
particles of only one type. \end{thm}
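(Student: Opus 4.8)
\medskip\noindent\textit{Proof strategy.}
The plan is to treat \cthm{thm:oldmain} purely as a statement about the constrained variational problem for $\F$, using the fact (established in \cite{ACLMMS} and recalled above) that minimizers of $\F$ subject to \eqref{eq:constraints} always exist and satisfy the Euler--Lagrange equations \eqref{eq:ABC}. The theorem then splits into two tasks: (A)~classify \emph{all} solutions of \eqref{eq:ABC} satisfying \eqref{eq:constraints}, and (B)~among those solutions decide which realize the minimum of $\F$. For (A) I would first differentiate along \eqref{eq:ABC} to recover the two first integrals $\sum_\alpha\rho_\alpha\equiv1$ and $K:=\prod_\alpha\rho_\alpha$, use them to reduce \eqref{eq:ABC} to the single autonomous equation \eqref{eq:osc} for each component, and conclude (with the properties of $y_K$ in \crem{rem:yk} and \cprop{recall}) that every solution is either the uniform profile $\rho\equiv1/3$ (the case $K=1/27$) or has the form \eqref{eq:solns} with phases \eqref{eq:phases}. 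After this reduction everything is governed by the map sending $(K,t_B)$, $K\in(0,1/27]$, to the mean densities $(r_A,r_B,r_C)$ via \eqref{eq:ra}; the solutions with prescribed $r$ are the preimages of $r$ under this map.

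For part~(b), $r\neq(1/3,1/3,1/3)$, note first that the uniform profile is \emph{not} a solution of \eqref{eq:ABC} in this case, so every solution---in particular every minimizer---has the form \eqref{eq:solns}. I would show that the restriction of the density map to ``type~1 parameters'' (those $K$ with $\tau_K\geq2\beta$, together with $t_B\in[0,\tau_K)$) is a homeomorphism onto the open simplex $\{r_\alpha>0,\ \sum_\alpha r_\alpha=1\}$, with the center corresponding to $K=1/27$: injectivity and properness from monotonicity of the averages in \eqref{eq:ra} in $t_B$ and in $K$ (using monotonicity of $\tau_K$ in $K$ and of $y_K$ on $[0,\tau_K/2]$ from \crem{rem:yk}), and surjectivity from a degree argument, the boundary of the parameter region mapping to the boundary of the simplex. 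This yields the unique type~1 solution. To see it minimizes $\F$: a minimizer exists and solves \eqref{eq:ABC}, hence is a type~$n$ solution for some $n\geq1$ (only finitely many $n$ being possible for given $\beta$); one then compares free energies and shows that for $n\geq2$ the free energy of a type~$n$ solution strictly exceeds that of the type~1 solution with the same $r$, leaving the type~1 solution as the unique minimizer.

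For part~(a), $r=(1/3,1/3,1/3)$, the structural point is that the constraint forces the window of length $2\beta$ to contain a whole number of periods. If $2\beta=n\tau_K$ then \eqref{eq:ra} gives $r_\alpha=\bar y_K=1/3$ for \emph{every} phase $t_B$, which produces the translation family in~(a)(ii); since, by \crem{rem:yk}, $\tau_K$ decreases strictly from $+\infty$ to the period $2\beta_c$ of the linearization of \eqref{eq:osc} about $y=1/3$ as $K$ increases over $(0,1/27]$, the equation $2\beta/n=\tau_K$ has a root $K\in(0,1/27)$ exactly when $\beta>n\beta_c$, and $2\beta=n\tau_K$ makes the solution of type~$n$. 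Conversely I would show that if $2\beta$ is not an integer multiple of $\tau_K$ then $r_A=r_B=r_C=1/3$ cannot all hold: expressing the condition in terms of the periodic function $s\mapsto\int_s^{s+2\beta}y_K$ and using the evenness of $y_K$, a Fourier/symmetry computation shows this non-constant function cannot take its mean value at three points spaced $\tau_K/3$ apart---this gives (a)(iii). For the minimizer, when $\beta\leq\beta_c$ there are no non-constant solutions and the unique uniform profile is the minimizer; when $\beta>\beta_c$ one evaluates $\S$ and $\E$ on the type~$n$ solutions---$\S$ depends only on the parameter $K_n$ fixed by $\tau_{K_n}=2\beta/n$ and approaches its maximum as $n$ grows, so entropy favours the finer patterns while the energy favours the coarser ones---and an elliptic-integral computation shows $\F$ is strictly increasing in $n$ and strictly below $\F$ of the uniform profile, so any type~1 solution is a minimizer.

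For part~(c), the type~1 minimizer has $\tau_K\geq2\beta\to\infty$, forcing $K\to0$; in that limit $y_K$ spends a fraction tending to $2/3$ of each period near its minimum $y_-(K)\to0$ and a fraction tending to $1/3$ near its maximum $y_+(K)\to1$, with transition layers of vanishing relative width, so each $\rho_\alpha$ tends to a $\{0,1\}$-valued profile. Reading off \eqref{eq:solns}--\eqref{eq:phases}, the limiting profile on $[0,1]$ segregates into pure $A$, $B$, $C$ regions in cyclic order with block lengths fixed by $r$: three blocks when the surviving translation parameter puts a block boundary at an endpoint, and four blocks---one species occupying a block at each end---otherwise. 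I expect the main obstacle to be the free-energy comparisons in parts~(a) and (b): showing that $\F$ on a type~$n$ solution ($n\geq2$) strictly exceeds $\F$ on the type~1 solution with the same mean densities, and that the type~1 value lies below that of the uniform profile for $\beta>\beta_c$. On these solutions $\F$ is an elliptic-function integral in $K$, and making the strict inequality rigorous will require either an explicit evaluation with a monotonicity analysis in $K$ and $n$ or a scaling/convexity argument; by comparison the classification step~(a)(iii) and the bijectivity claim in~(b), while nontrivial, are in essence monotonicity and degree arguments.
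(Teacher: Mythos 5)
You should first be aware that this paper contains no proof of \cthm{thm:oldmain} at all: it is quoted as background from \cite{ACLMMS} (``The solutions which minimize $\F$ were completely determined in \cite{ACLMMS}''), so the only possible comparison is with the proofs in that reference. Your sketch does reproduce their architecture faithfully: reduce \eqref{eq:ABC} via the first integrals $\sum_\alpha\rho_\alpha=1$ and $K=\prod_\alpha\rho_\alpha$ to \eqref{eq:osc}, parametrize nonconstant solutions by $(K,t_B)$ through \eqref{eq:solns}--\eqref{eq:phases}, impose \eqref{eq:ra}, and then select minimizers among the classified solutions. Your argument for (a)(iii) is essentially correct, though its engine is monotonicity, not Fourier analysis: writing $2\beta=m\tau_K+2\beta'$ with $0\le 2\beta'<\tau_K$, the condition $r_\alpha=1/3$ reduces to $Y(K,t_\alpha,\beta')=2\beta'/3$ for three points $t_\alpha$ spaced $\tau_K/3$ apart, and by \cprop{recall}(c.i) the nonconstant function $Y(K,\cdot,\beta')$ attains each interior value at most twice per period, forcing $\beta'=0$. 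The description of the zero-temperature limit in (c) is also essentially right.

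The genuine gaps are exactly the two steps you defer, and the mechanisms you name for them would not work as stated. First, for uniqueness in (b) you propose ``injectivity \dots from monotonicity of the averages in \eqref{eq:ra} in $t_B$ and in $K$''; but $r_\alpha(K,t_B)=Y(K,t_\alpha,\beta)/(2\beta)$ is not monotone in $t_B$ (it oscillates exactly as in \cfig{fig:Y}), and monotonicity in $K$ of such integrals is itself a nontrivial fact---it is the content of \cprop{recall}(d,e) and of the proof of \clem{monotone}(c) in the appendix, which needs the one-intersection property of $y_{K_1},y_{K_2}$; moreover coordinate-wise monotonicity of a two-variable map $(K,t_B)\mapsto r$ does not by itself give injectivity. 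In \cite{ACLMMS} this uniqueness is a separate, delicate theorem (their Theorem~6.1 together with Lemma~6.2), and the appendix of the present paper has to build an analogous two-function argument ($Y$ together with $W$) to get the grand canonical counterpart. Second, the free-energy comparisons: \cite{ACLMMS} excludes type~$n$ solutions, $n\ge2$, not by evaluating elliptic integrals but by an explicit rearrangement that strictly lowers $\F$ at fixed $r$ (this is precisely what Case~(a.i) of the appendix invokes), and the comparison of the constant profile with type~1 solutions for $\beta>\beta_c$ likewise does not come from a closed-form evaluation of $\F$; your proposal leaves both as ``an elliptic-integral computation \dots or a scaling/convexity argument,'' which is to say that the load-bearing content of parts (a) and (b) is not actually supplied. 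As it stands the proposal is an accurate road map of the known proof's skeleton, but not a proof of the statements that make the theorem hard.
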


\section{Dynamics of the grand canonical ABC model\label{sec:dynamics}}

We now turn to consideration of the ABC model on the interval when the
number of particles can fluctuate; we will abbreviate this as the GCABC
model.  In Section~1 the corresponding grand canonical measure
$\mu_{\beta,\lambda}$ (see \eqref{eq:mu}) was presented in the equilibrium
setting as a Gibbs measure obtained from the energy function \eqref{eq:EN}
and chemical potentials $\beta^{-1}\lambda_\alpha$.  Just as for the
canonical Gibbs measure, however, one may alternatively view this as the
stationary measure for some dynamics; we describe two possibilities here.
(A different generalization of the ABC model to a nonconserving dynamics,
in which the system is on a ring, vacancies are permitted, and the total
number of particles fluctuates but the differences $N_\alpha-N_\gamma$ are
conserved, is given in \cite{LM,LCM}.  When all the $N_\alpha$ are equal
the stationary measure has the form of a grand canonical ensemble.)

In the first dynamics we consider there are particle exchanges between
adjacent sites, with the same rates as for the canonical dynamics.  To
allow the number of particles to fluctuate, however, we introduce two new
possible transitions.  First, if the particle at site $i=1$ is of type
$\alpha$ then with a rate equal to $Ce^{-\lambda_\alpha}$ the entire
configuration is translated by one site to the left, the particle at site
$i=1$ disappears, and a particle of species $\alpha+1$ is created at site
$i=N$. Second, with a rate equal to $Ce^{-\lambda_{\alpha+1}}$ the reverse
transition occurs.  Here $C$ is a constant which we shall in the future
take equal to~1.  This dynamics satisfies the detailed balance condition
with respect to the Gibbs measure \eqref{eq:mu}: if a transition
$\zetau\to\zetau'$ arises from an exchange of particles the argument is as
for the canonical model \cite{ACLMMS}, while if it comes from a transition
of the new type, say in the ``forward'' direction as described above, then
$E_N(\zetau)=E_N(\zetau')$ but $N_\alpha$ decreases by 1 and $N_{\alpha+1}$
increases by 1, and the detailed balance condition
$e^{-\lambda_\alpha}\nu_\beta(\zetau)
=e^{-\lambda_{\alpha+1}}\nu_\beta(\zetau')$ follows.

\begin{rem}
 One may also obtain this dynamics  by considering a ring
of $N$ sites, with each site occupied by an $A$, $B$, or $C$ particle and
with a marker located on one of the bonds between adjacent sites.  Adjacent
particles exchange across any unmarked bond with the usual ABC rates, while
the marker may move one bond to its left or right, and in doing so it
changes the species of the particle it passes: with \mbond and \bond
denoting a marked and unmarked bond, respectively, the transition
$\mbond\alpha\ \bond\to\bond(\alpha+1)\ \mbond$ occurs with a rate equal to
$e^{-\lambda_\alpha}$ and the reverse transition with a rate equal to
$e^{-\lambda_{\alpha+1}}$.  If one then obtains a configuration on the
interval from a ring configuration by letting the marked bond identify the
boundaries of the interval---effectively by cutting the ring at the marked
bond---one sees easily that the inherited dynamics on the interval is
precisely the dynamics discussed above.  A
slight variation of this idea was mentioned in \cite{ACLMMS}.
\end{rem}

We define the second dynamics only for the case in which all the
$\lambda_\alpha$ are equal.  We obtain it by first defining a dynamics for
the {\it constrained ring:} a ring of $3N$ sites populated by $A$, $B$, and
$C$ particles but with a restriction to configurations $(\xi_i)_{i=1}^{3N}$
which satisfy
 \be\label{eq:restrict}
\xi_{i+N}=\xi_i+1
 \ee
  (addition on the site index is modulo $3N$); that is, if an $A$ particle
is on site $i$ then there must be a $B$ particle at site $i+N$ and a $C$
particle at site $i+2N$, etc.  The dynamics for the constrained ring is
given by a modification of the usual rules of the canonical ABC model on a
ring: exchanges occur simultaneously across three equally spaced, unmarked
bonds in the usual ABC manner, with rate 1 for the favored exchanges and
rate $q=e^{-\beta/N}$ for the unfavored ones.

We consider now any fixed block of $N$ consecutive sites on the constrained
ring and ask for the induced dynamics on configurations in this block. Two
types of transitions occur: nearest-neighbor exchanges at standard ABC
rates for a system of size $N$ and inverse temperature $\beta$ (i.e., rates
$1$ and $q=e^{-\beta/N}$) and a transition corresponding to an exchange on
the constrained ring across the boundaries of the block.  To understand the
latter, suppose the configuration within the block has the form 
$(\alpha+2)\,\zetau\,(\alpha+1)$, with $\zetau$ any configuration on $N-2$
sites; then \eqref{eq:restrict} implies that the particles immediately to
the left and right of the block are of type $\alpha$, and a transition from
$(\alpha+2)\,\zetau\,(\alpha+1)$ to $\alpha\,\zetau\,\alpha$ occurs at rate
1.  The reverse transition occurs at rate $q$, and no such transition
occurs when the block configuration is $(\alpha+2)\,\zetau\,\alpha$.  Then
using $\lambda_A=\lambda_B= \lambda_C$ one checks, just as for the dynamics
considered above, that if one identifies the block with an interval of $N$
sites then this dynamics satisfies the detailed balance condition with
respect to the grand canonical Gibbs measure \eqref{eq:mu}.

On the constrained ring there are equal numbers of $A$, $B$, and $C$
particles, from \eqref{eq:restrict}, so that the energy $E_{3N}$ (that is,
the energy given by \eqref{eq:EN} with $N$ replaced by $3N$ throughout),
and thus the restriction of the Gibbs measure $Z^{-1}\exp\{-\beta E_{3N}\}$
to particle configurations satisfying \eqref{eq:restrict}, is well defined
and independent of the starting point of the summations \cite{Evans98}.
Moreover, this is the invariant measure for the constrained ring dynamics
defined above, as one again checks by verifying detailed balance. With the
discussion above this shows that the restriction of
$Z^{-1}\exp\{-\beta E_{3N}\}$ to the block of $N$ sites is the Gibbs
measure \eqref{eq:mu}.  One may also verify this from the fact that if
$\xiu$ is a constrained ring configuration and $\zetau$ the portion of that
configuration within the block then
 \be\label{EN3N}
E_{3N}(\xiu)=E_N(\zetau)+N/3.
 \ee
 Thus we can study the GCABC with $\lambda_A=\lambda_B= \lambda_C$ by
studying directly the constrained ring.

\subsection{The scaling limit for the constrained ring\label{sec:sclim}}

To identify typical coarse-grained density profiles at large $N$ on the
constrained ring we consider the scaling limit \eqref{eq:scale} with $N$
replaced by $3N$ ($N\to\infty$ with $i/3N\to x\in[0,1]$) and find the
appropriate free energy functional.  The scaling limit of the energy per
site is still given by \eqref{eq:E}, but because the full microscopic
configuration under the constraint \eqref{eq:restrict} is determined by the
configuration of the first $N$ sites the entropy per unit site is only 1/3
of \eqref{eq:S}.  This leads to a free energy functional
 \be\label{eq:3F}
\beta\E(\{n\})-\frac13\S(\{n\})=\frac13\F^{(3\beta)}(\{n\}).
 \ee
 Here $\E(\{n\})$ and $\S(\{n\})$ are as in \eqref{eq:E} and \eqref{eq:S}
and $n$ is a constrained density profile, that is, one which
satisfies the continuum equivalent of \eqref{eq:restrict}: 
 \be\label{eq:image}
n_\alpha(x)=n_{\alpha+1}(x+1/3),
 \ee
 where the addition $x+1/3$ is taken modulo 1.    $\F^{(3\beta)}$ is the
free energy functional at temperature $3\beta$ of the (unconstrained)
canonical system on an interval, as defined in \eqref{eq:S}; equivalently,
because there are equal numbers of particles of each species, this is the
free energy functional on a ring \cite{ACLMMS}.

Typical (coarse-grained) profiles at inverse temperature $\beta$ on the
constrained ring, for large $N$, correspond then to continuum density
profiles $\rho(x)$ which satisfy the constraint \eqref{eq:image} and
minimize the free energy over all such constrained profiles.  It follows
from \eqref{eq:solns} and \eqref{eq:phases}, however, that the typical
profiles (minimizers) for the canonical free energy, which are a priori
unconstrained, do in fact satisfy \eqref{eq:image}.  Thus by \eqref{eq:3F}
the typical profiles for the constrained ring are the same as the typical
profiles of an unconstrained system on the ring at inverse temperature
$3\beta$.  This is illustrated in Figure~\ref{fig:profiles}, where we plot
time-averaged profiles from Monte-Carlo simulations of the constrained ring
at $\beta=10.152$ and the exact solution \cite{ACLMMS} for the
unconstrained ring at $\beta=30.456$, showing close agreement.  (We can use
time averaging rather than spatial coarse graining for this comparison
because the time scale for the profile to drift around the ring is much
larger than the simulation time scale.)

\begin{figure}
\centerline{\includegraphics[width=12cm,height=6cm]{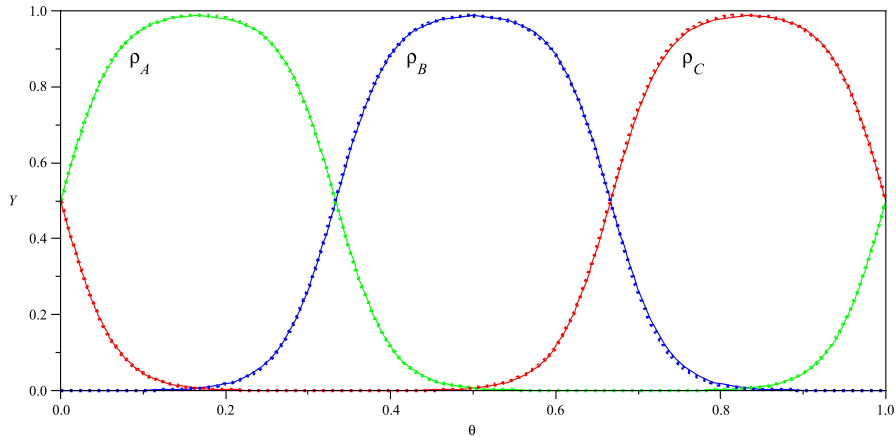}}
\caption{Typical profiles in a  large system.  The dotted curves are
  time-averaged occupation numbers in a constrained ring of size $N=1800$
  at inverse temperature $\beta=10.152$.
  The solid curves are the corresponding elliptic functions obtained from
  the exact solution of \cite{ACLMMS} at temperature $\beta=30.456$.}
\label{fig:profiles}
\end{figure}

It follows from this discussion that when the chemical potentials are equal
the critical temperature $\hat\beta_c$ for the grand canonical ensemble on
an interval, which is represented by the part of the constrained ring
between two markers, is $\hat\beta_c=\beta_c/3$.  Typical configurations
are constant if $\beta<\hat\beta_c$ and for $\beta>\hat\beta_c$ are a
portion of the typical profile for the canonical system at inverse
temperature $3\beta$; the latter is periodic and in the GCABC system we see
a randomly-selected one-third of a period.  These properties are confirmed
in Section~\ref{sec:scaling} by direct analysis of the grand canonical
system in the scaling limit.

\section{The  phase diagram of the GCABC model\label{sec:scaling}}

In this section we discuss the GCABC model directly in the scaling limit
\eqref{eq:scale}.  From \eqref{eq:mu} we see that the new free energy
functional $\Fh(\{n\})\,\bigl(=\Fh_{\beta,\lambda}(\{n\})\bigr)$, which is
the negative of the pressure multiplied by $\beta$, is obtained by adding
chemical potential terms to the free energy functional of the canonical
model:
 \be
\Fh(\{n\})
  =\F(\{n\})-\sum_\alpha\lambda_\alpha\int_0^1dx\,n_\alpha(x), 
 \ee
 with $\F$ given by \eqref{eq:F}.  The profiles now are constrained
 only by 
 \be   \label{eq:gcconstraints}
0\le n_\alpha(x)\le1   \quad\hbox{and}\quad \sum_\alpha n_\alpha(x) = 1.
 \ee
We will always normalize the chemical potentials so that
$\sum_\alpha\lambda_\alpha=0$ (although with this normalization
we cannot conveniently consider the limit in which just one of the
$\lambda_\alpha$ becomes infinite).  Just as for the canonical model
\cite{ACLMMS} it can be shown on general grounds that for every
$\beta,\lambda$ the free energy functional has at least one minimizing
profile $\rho(x)$ which belongs to the interior of the constraint region,
i.e., satisfies $0<\rho_\alpha(x)<1$ for all $\alpha,x$ (and of course
$\sum_\alpha\rho_\alpha(x)=1$ for all $x$).  From this it follows that
$\rho(x)$ will satisfy
 \begin{align}
\frac{\delta}{\delta \rho_A(x)}
    \left[\Fh(\{\rho\})\Big|_{\rho_C=1-\rho_A-\rho_B}\right]
    &=(\F_A(x)-\lambda_A)-(\F_C(x)-\lambda_C)=0,\\
\frac{\delta}{\delta \rho_B(x)}
    \left[\Fh(\{\rho\})\Big|_{\rho_C=1-\rho_A-\rho_B}\right]
    &=(\F_B(x)-\lambda_B)-(\F_C(x)-\lambda_C)=0,
 \end{align}
with $\F_\alpha(x) $ as in \eqref{eq:FA}, so that $\F_\alpha(x)-\lambda_\alpha$
is independent of $\alpha$.  But one finds from \eqref{eq:FA} that
$\sum_\alpha\rho_\alpha\partial\F_\alpha(x)/\partial x=0$, so that
 \be\label{eq:C}
\F_\alpha(x)-\lambda_\alpha = C
 \ee
 for some $C$ independent of $x$ and $\alpha$.  Differentiating
\eqref{eq:C} leads again to \eqref{eq:ABC}:
  \be\label{eq:ABC2}
 \frac{d\rho_\alpha}{dx} =  
  \beta\rho_\alpha ( \rho_{\alpha-1}- \rho_{\alpha+1}),\qquad \alpha=A,B,C.
 \ee
 Moreover, \eqref{eq:C} implies that
$\F_\alpha(0)-\lambda_\alpha
    =\F_{\alpha+1}(1)-\lambda_{\alpha+1}$,
 which with \eqref{eq:FA} yields the boundary condition 
 \be \label{eq:BC}
\rho_\alpha(0)e^{-\lambda_\alpha}=\rho_{\alpha+1}(1)e^{-\lambda_{\alpha+1}},
    \qquad \alpha=A,B,C.
 \ee
 Note that \eqref{eq:BC} is consistent with the (first) dynamics described in
 Section~{sec:dynamics}.

 Equations \eqref{eq:ABC2} and \eqref{eq:BC} may be taken as the ELE of the
model (it is easy to verify that these imply \eqref{eq:C}).  Solutions of
\eqref{eq:ABC2} are, by the analysis of \cite{ACLMMS}, of the form
\eqref{eq:solns}, with phase shifts satisfying \eqref{eq:phases}.  It
remains only to consider the effect of the boundary condition
\eqref{eq:BC}.  

Let us begin by considering the case $\lambda_A=\lambda_B=\lambda_C$, in
which \eqref{eq:BC} becomes $\rho_\alpha(0)=\rho_{\alpha+1}(1)$.  Certainly
the constant profile with $\rho_\alpha(x)=1/3$ for all $\alpha,x$ satisfies
this condition and hence is a solution for all $\beta$.  From
\eqref{eq:phases} we see that a nonconstant solution \eqref{eq:solns} will
satisfy this condition if and only if
 \be\label{eq:eq1}
y_K(t_\alpha-\beta)=y_K(t_\alpha+\beta-\tau_K/3), \qquad \alpha=A,B,C.
 \ee
 The properties of $y_K$ mentioned in \crem{rem:yk} imply that
\eqref{eq:eq1} can hold if and only if
$(t_\alpha-\beta)\pm(t_\alpha+\beta-\tau_K/3)$ is an integer multiple of
$\tau_K$.  The choice of the positive sign here leads to no solutions
consistent with \eqref{eq:phases}; the negative sign gives
$2\beta=(3n-2)\tau_K/3$ for $n=1,2,3,\ldots$.  Since the minimal period of
solutions of \eqref{eq:ABC} is $2\beta_c=4\pi\sqrt3$, a nonconstant
solution of \eqref{eq:ABC} and \eqref{eq:BC} can exist only if
$\beta>\beta_c/3$; thus as in Section~\ref{sec:dynamics} we find that
$\hat\beta_c=\beta_c/3$ is the critical inverse temperature for the GCABC
model.  There is no constraint on the $t_\alpha$ other than
\eqref{eq:phases}, so that there is a one-parameter family of solutions
differing by translation.

Following the usage of \cite{ACLMMS} it is natural to refer to the
solutions just discussed for which $2\beta=(3n-2)\tau_K/3$ as being of {\it
type $n$}.  We will, again as in \cite{ACLMMS}, extend this classification
to the case of general $\lambda$: a solution \eqref{eq:solns} of
\eqref{eq:ABC2} and \eqref{eq:BC} will be said to be {\it of type $1$} if
$2\beta\le\tau_K/3$ and {\it of type $n$}, $n=2,3,\ldots$, if
$(3n-5)\tau_K/3<2\beta\le(3n-2)\tau_K/3$. With this terminology we can
state our main result.

 \begin{thm}\label{thm:main} (a) If $\lambda_A=\lambda_B=\lambda_C$ then
for the equations \eqref{eq:ABC2} and \eqref{eq:BC} there exist (i)~the
constant solution, (ii)~for $\beta>(n-2/3)\beta_c=2\pi (n-2/3)\sqrt3$,
$n=1,2,3,\ldots$, a family of solutions of type $n$, differing by
translation, and (iii)~no other solutions.  The minimizers of the free
energy functional $\Fh$ are, for $\beta\le\beta_c/3$, the
(unique) constant solution and, for $\beta>\beta_c$, any type~1 solution.

 \smallskip\noindent
 (b) If not all $\lambda_\alpha$ are equal then there exists for all
$\beta$ a unique minimizer of the free energy functional
$\Fh$; moreover, this minimizer is a type~1 solution of
\eqref{eq:ABC2} and \eqref{eq:BC}.  \end{thm}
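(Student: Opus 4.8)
\medskip\noindent\textit{Proof plan.}\ The plan rests on two facts from the preceding material: every minimizer of $\Fh$ is interior and hence a solution of the Euler--Lagrange equations \eqref{eq:ABC2}--\eqref{eq:BC}; and, by \cite{ACLMMS}, every solution of the ODE \eqref{eq:ABC2} is either the constant profile $\rho_\alpha\equiv1/3$ or an elliptic profile \eqref{eq:solns} with phase shifts \eqref{eq:phases}, parametrized by a pair $(K,t_B)\in(0,1/27)\times(\bR/\tau_K\bZ)$. So the argument has two stages: determine which solutions of \eqref{eq:ABC2} satisfy the boundary condition \eqref{eq:BC}, and compare the values of $\Fh$ on those that do. For part~(a) the first stage was carried out in the discussion before the theorem, so only the existence count and the minimizer claims remain.

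For part~(a): the equation $2\beta=(3n-2)\tau_K/3$ that selects the type-$n$ solutions has a root $K\in(0,1/27)$ precisely when $6\beta/(3n-2)$ lies in the range $(2\beta_c,\infty)$ of the strictly decreasing function $\tau_K$ (\crem{rem:yk}, \cprop{recall}), i.e.\ precisely when $\beta>(n-2/3)\beta_c$; the phase $t_B$ is then free, giving the one-parameter family. If $\beta\le\beta_c/3$ the constant profile is the only solution of \eqref{eq:ABC2}--\eqref{eq:BC}, hence the unique minimizer. If $\beta>\beta_c$, I would first rule out the constant: under the normalization $\sum_\alpha\lambda_\alpha=0$ the equal-$\lambda$ functional $\Fh$ coincides with $\F^{(\beta)}$ (the chemical-potential term is constant on the constraint set), so $\min\Fh\le\min\{\F^{(\beta)}(\{n\}):\int_0^1 n_\alpha=1/3\ \forall\alpha\}<\F^{(\beta)}(\text{const})=\Fh(\text{const})$ by \cthm{thm:oldmain}(a); hence the minimizer is one of the type-$n$ solutions. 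That it is of type~1, and that every type-1 solution is a minimizer, then follows either from a direct comparison of $\Fh$ across the type-$n$ families (the elliptic-integral estimate of \cite{ACLMMS}, specialized via $2\beta=(3n-2)\tau_K/3$), or, more transparently, from the constrained-ring picture of Section~\ref{sec:dynamics}: by \eqref{EN3N} the equal-$\lambda$ GCABC measure is the restriction to a block of the canonical ring measure at inverse temperature $3\beta$, whose minimizer (\cthm{thm:oldmain}(a)) is the type-1 ring solution with $\tau_K=6\beta$, and the restricted profile is exactly a GCABC type-1 solution with $2\beta=\tau_K/3$. (That route actually gives the conclusion for all $\beta>\beta_c/3$; the stated threshold $\beta>\beta_c$ is what the self-contained interval argument delivers.)

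For part~(b): fix $\lambda$ with $\sum_\alpha\lambda_\alpha=0$ but not all $\lambda_\alpha$ equal, and let $\rho$ be a minimizer; it solves \eqref{eq:ABC2}--\eqref{eq:BC} and is not the constant (which would force $e^{-\lambda_\alpha}=e^{-\lambda_{\alpha+1}}$), so it is an elliptic profile with some $(K,t_B)$. The key reduction is to the canonical problem: $\rho$ solves \eqref{eq:ABC} for its own mean densities $r=r(\rho)=(\int_0^1\rho_\alpha\,dx)_\alpha$, and because $\Fh(\{n\})-\Fh(\{\rho\})=\F(\{n\})-\F(\{\rho\})$ whenever $r(n)=r(\rho)$, $\rho$ minimizes the canonical free energy at density $r$. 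I would then argue $r\neq(1/3,1/3,1/3)$: the reduced free energy $g(r):=\min\{\F^{(\beta)}(\{n\}):r(n)=r\}$ is cyclically symmetric, so (granting its differentiability at the symmetric point, which I would need to establish) $r=(1/3,1/3,1/3)$ can be stationary for $g(r)-\sum_\alpha\lambda_\alpha r_\alpha$ only when $\lambda\equiv0$; hence for $\lambda\neq0$ the minimizer has $r\neq(1/3,1/3,1/3)$, and \cthm{thm:oldmain}(b) identifies $\rho$ with the unique canonical type-1 solution at $(r,\beta)$. Two points remain: (i) this solution a priori only satisfies $2\beta\le\tau_K$, and I must improve it to the GCABC type-1 bound $2\beta\le\tau_K/3$; (ii) the density $r$, hence $\rho$, is unique. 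Both are statements about the map sending a density $r\neq(1/3,1/3,1/3)$ to the unique $\lambda$ for which its canonical type-1 solution satisfies \eqref{eq:BC}---read off as $\lambda_\alpha-\lambda_{\alpha+1}=\log(\rho_\alpha(0)/\rho_{\alpha+1}(1))$, consistent around the cycle since $\prod_\alpha\rho_\alpha(0)=\prod_\alpha\rho_\alpha(1)=K$. Using the strict monotonicity of $\tau_K$ and of the partial period-integrals of $y_K$ from \cprop{recall}, the goal is to show this map is a homeomorphism onto $\{\lambda:\sum_\alpha\lambda_\alpha=0\}\setminus\{0\}$ and that all densities in its range keep $2\beta\le\tau_K/3$.

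The real obstacle is (ii) of part~(b): global uniqueness of the minimizer, together with showing no higher-type solution ever undercuts the type-1 one in $\Fh$. For large $\beta$ and small $|\lambda|$ there genuinely exist competing higher-type solutions---small-amplitude perturbations of the short-period symmetric profiles---so a quantitative comparison across the whole two-parameter family of elliptic solutions seems unavoidable, and the degeneracy of the equal-$\lambda$ minimizers at $\lambda=0$ makes this delicate exactly where the type-1 branch passes near the origin. I would run this as a continuation/degree argument anchored at large $|\lambda|$, where the minimizer is a manifestly unique near-segregated type-1 profile, carrying uniqueness and the type-1 property inward along decreasing $|\lambda|$ and checking---via the monotonicity lemmas for $y_K$---that branches can coalesce only at $\lambda=0$.
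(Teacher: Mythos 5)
Your part (a) is correct and is essentially the paper's own argument: the paper proves the minimizer statement via the tripling map $\Theta$ and the identity \eqref{fetrip}, i.e.\ exactly the constrained-ring reduction you prefer, which converts the unconstrained minimization of $\Fh^{(\beta,0)}=\F^{(\beta)}$ into the equal-density canonical problem at inverse temperature $3\beta$ and then invokes \cthm{thm:oldmain}(a); as you note, this yields the type-1 conclusion for all $\beta>\beta_c/3$, which is also what the paper's proof delivers. The classification (i)--(iii) via $2\beta=(3n-2)\tau_K/3$ and monotonicity of $\tau_K$ likewise matches the discussion preceding the theorem.

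Part (b) is where the genuine gap lies, and it is precisely the part you leave unexecuted. Your reduction to the canonical problem plus \cthm{thm:oldmain}(b) only excludes solutions with $2\beta>\tau_K$; it does not exclude solutions with $\tau_K/3<2\beta\le\tau_K$ (canonical type 1 but grand-canonical type $\ge2$), does not prove that at most one grand-canonical type-1 solution exists, and the proposed continuation/degree argument ``anchored at large $|\lambda|$'' is never carried out -- indeed you flag its delicacy near $\lambda=0$ yourself. The paper closes exactly these holes by elementary comparisons rather than continuation: writing \eqref{eq:BC} as $\lambda_\alpha=\delta/3-\tfrac12Y(K,s_\alpha,\delta)=\delta/3-\tfrac12W(K,s_{\alpha+1},\delta)$ with $\delta=\tau_K/6-\beta$ (equations \eqref{eq:lax1}--\eqref{eq:lax2}), it first deduces from the ordering of the $\lambda_\alpha$ an ordering of the mean densities (\crem{rem:order}); then \clem{lem:mainl}(a) kills type $n\ge2$ solutions case by case -- by the rearrangement of \cite{ACLMMS} when $2\beta>\tau_K$, by comparison with a profile whose densities lie on the curve $\Gamma$ when $2\beta=\tau_K$, and, in the crucial range $\tau_K/3<2\beta<\tau_K$, by the cyclic relabeling $\tilde\rho_\alpha=\rho_{\alpha+1}$, which preserves $\F$ and strictly lowers $-\sum_\alpha\lambda_\alpha r_\alpha$ by the density ordering -- and \clem{lem:mainl}(b) gets uniqueness of the type-1 solution from the joint monotonicity of $Y$ and $W$ in $(K,s,\delta)$ (\clem{monotone}): two solutions would force $s_1<s_2$ from the $Y$-equations and $s_1>s_2$ from the $W$-equations, a contradiction. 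A smaller unproved step in your plan is the differentiability of $F(r)$ at $r=(1/3,1/3,1/3)$ used to rule out equal densities; this can be avoided (by averaging over cyclic shifts, or, as in the paper, by the $\Gamma$ comparison), but as written your proposal establishes (a) while the substance of (b) is missing.
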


We give the proof of part (a) of this theorem here; the more technical
proof of (b) is presented in Appendix~\ref{sec:proof}.  

\begin{proofof}{\cthm{thm:main}(a)} The discussion at the beginning of this
section establishes the first statement of the theorem; it remains to show
that the type~1 solution, when it exists, minimizes the free energy.  We do
so by reducing this problem to the corresponding one for the canonical
ensemble; the argument is similar to the consideration of the constrained
ring system in Section~\ref{sec:dynamics}.  For any profile
$n(x)=(n_A(x),n_B(x),n_C(x))$ (where it is understood that
$0\le n_\alpha(x)\le 1$ and $\sum_\alpha n_\alpha(x)=1$) define the {\it
tripled} profile $\Theta(\{n\})$ by
 \be
(\Theta(\{n\}))_\alpha(x)=\begin{cases}
       n_\alpha(3x),&\hbox{if $0\le x<1/3$},\\
       n_{\alpha-1}(3x-1),&\hbox{if $1/3\le x<2/3$},\\
       n_{\alpha-2}(3x-2),&\hbox{if $2/3\le x<1$}.
  \end{cases} \ee
 The profiles which have the form $\Theta(\{n\})$ for some $n$ are
precisely those satisfying \eqref{eq:image}; in particular, each
$\Theta(\{n\})$ gives equal mean densities to the three species.  

Now a simple computation shows that for any profile $\{n_\alpha(x)\}$,
 \be\label{fetrip}
\F^{(\beta)}(\{n\})=\F^{(3\beta)}(\{\Theta(\{n\})\})-\beta/3.
 \ee
 (Note that this free energy differs by an overall factor, plus an additive
constant, from that of \eqref{eq:3F}; the difference arises because here we
started from the energy and entropy per site on the interval of size $N$,
and in \eqref{eq:3F} from the energy and entropy per site on the ring of
size $3N$.)  Thus the problem of finding the minimizer(s) of
$\Fh^{(\beta,0)}(\{n\})=\F^{(\beta)}(\{n\})$ over all profiles $n$ is
equivalent to finding the minimizer(s) of $\F^{(3\beta)}(\{n\})$ over all
profiles satisfying \eqref{eq:image}.  On the other hand, the minimizers of
$\F^{(3\beta)}$ over all equal-density profiles are given in
\cthm{thm:oldmain}(a): the constant solution if $3\beta\le\beta_c$ and the
solution of (minimal) period $6\beta$ if $3\beta>\beta_c$ (this is the
type~1 solution for the canonical model).  Because these are either
constant or periodic, they satisfy \eqref{eq:image} and hence are also the
minimizers over all such profiles.  But these minimizers are precisely the
images under $\Theta$ of the profiles identified as minimizers in
\cthm{thm:main}(a).  \end{proofof}


\begin{rem} In the argument above the essential role of the tripling map
$\Theta$ is to convert the problem of minimizing $\Fh$ with respect to
arbitrary variations in the profiles to the previously solved problem of
minimizing under variations which preserve the condition
$\int_0^1dx\,n_\alpha(x)=1/3$.  Other conclusions may be obtained
similarly; we mention briefly two examples.

 \smallskip\noindent
  (a) It was shown in \cite{ACLMMS} that, for $\beta<(2/3\sqrt3)\beta_c$
and any $r=(r_A,r_B,r_C)$, $\F(\{n\})$ is convex as a functional of
profiles satisfying \eqref{eq:constraints}. Via $\Theta$ this implies that
$\Fh(\{n\})$ is, for $\beta<(2/3\sqrt3)\hat\beta_c$, convex as a
function of profiles satisfying \eqref{eq:gcconstraints}.

 \smallskip\noindent
 (b) The two point correlation functions on the interval are related to
those on the constrained ring by
 \be\label{eq:icr}
\langle n(x)n(y)\rangle_{\rm interval}
 = \langle n(x/3)n(y/3)\rangle_{\rm ring}.
 \ee
 The latter (denoted below simply as $\langle\cdot\rangle$) may be computed
in the high temperature phase by a calculation parallel to that of
\cite{BDLvW}.  On the constrained ring a perturbation
$n_\alpha(x)=1/3+a_\alpha\cos(2\pi mx)+b_\alpha\sin(2\pi mx)$ of the
constant solution satisfies \eqref{eq:image} and
$\sum_\alpha n_\alpha(x)=1$ if and only if $m=3k+j$ for $j=1$ or $2$, and
 \be
a_{\alpha+1}=-\frac12a_\alpha+(-1)^j\frac{\sqrt3}2b_\alpha,\qquad
b_{\alpha+1}=-\frac12b_\alpha-(-1)^j\frac{\sqrt3}2a_\alpha.
 \ee
 One may thus treat $a_A$ and $b_A$ as the independent parameters.  The
probability of the profile $\{n_\alpha(x)\}$ is
$\exp\{-3N\F^{(3\beta)}(\{n\})\}$, and to quadratic order in the
perturbation,
 \be
 \F^{(3\beta)}(\{n\})\simeq{\rm constant}
  +\frac9{8\pi m} \,\bigl[2\pi m+(-1)^j\sqrt3\beta\bigr]
    \,\bigl(a_A^2+b_A^2\bigr).
 \ee
 Thus
 \be
\langle a_A^2\rangle=\langle b_A^2\rangle
  =\frac{4\pi m}{27N(2\pi m+(-1)^j\sqrt3\beta)},\qquad
 \langle a_Ab_A\rangle=0.
 \ee
  Summing over all the fluctuations, i.e., over $m$, we obtain
 \begin{eqnarray}\label{eq:twopoint}
\langle n_\alpha(x)n_\alpha(y)\rangle_c
  &=& \frac{4\pi}{27N}\sum_{k=0}^\infty\sum_{j=1}^2
\frac{m\cos[2\pi m(x-y)]}{2\pi m+(-1)^j\sqrt3\beta}\,
 \Bigg|_{m=3k+j}.
 \end{eqnarray}
 All connected two-point functions
$\langle n_\alpha(x)n_\gamma(y)\rangle_c$ may be obtained on the
constrained ring from \eqref{eq:twopoint} via \eqref{eq:image}, and then on
the interval using \eqref{eq:icr}.  Note that \eqref{eq:twopoint} diverges
as $\beta\nearrow\hat\beta_c$.

 \end{rem}

\subsection {The canonical free energy $F(r)$ \label{sec:cfe}}

The free energy in the canonical model, for mean densities $r_A$, $r_B$,
$r_C$ satisfying $r_A+r_B+r_C=1$, is given by
 \be\label{eq:cfe}
F(r)=F(r_A,r_B,r_C)=\min_{\{n(x)\}} \F(\{n(x)\}),
 \ee
 with the minimum taken over all profiles $n(x)$ satisfying the constraints
\eqref{eq:constraints}.  The grand canonical free energy may then be
computed in two ways:
 \begin{eqnarray}\label{eq:gcfe}
 \hat F(\lambda)
  &=&\inf_{\{n(x)\}}\Fh(\{n\})\\
  &=& \inf_{\textstyle\atop{\sum_\alpha r_\alpha=1}{r_\alpha\ge0}} 
  \left\{F(r)-\sum_\alpha\lambda_\alpha r_\alpha\right\},
  \label{eq:gcfe2}
 \end{eqnarray}
 where the infimum in \eqref{eq:gcfe} is over all profiles.  We can obtain
information on the structure of $F(r)$ from the above results for the
minimization problem \eqref{eq:gcfe}, together with the trivial remarks
that a unique minimum for \eqref{eq:gcfe} implies a unique minimum for
\eqref{eq:gcfe2} and that such a unique minimum implies that the surface
$y=F(r)$ lies above the plane
$y=\hat F(\lambda)+\sum_\alpha\lambda_\alpha r_\alpha$ and touches it at a
single point.

In particular, the fact that when $\beta\le\hat\beta_c$ there is for all
$\lambda$ a unique minimizer for \eqref{eq:gcfe} implies that for such
$\beta$ the function $F(r)$ is convex.  When $\beta>\hat\beta_c$ the
minimizer for \eqref{eq:gcfe} is unique except in the case
$\lambda_A=\lambda_B=\lambda_C=0$, when the plane mentioned above is
horizontal.  In that case the minimum occurs at points lying above a
certain simple closed curve $\Gamma\,(=\Gamma_\beta)$ in the plane
$\sum_\alpha r_\alpha=1$, with the point $r_A=r_B=r_C=1/3$ in its interior;
sample curves are shown in Figure~\ref{fig:ccurves}.  $\Gamma$ may be
parametrized as $r^*(t)$, $0\le t\le\tau_K$, where $K$ is the parameter in
the type 1 solution of \cthm{thm:main}(a) and
 \be
r_\alpha^*(t)=\frac3{\tau_K}\int_{-\tau_K/6}^{\tau_K/6}y_K(s+t_\alpha+t)\,ds.
 \ee
 (The fact that this curve is simple follows, for example, from
\cprop{recall}(d).)  The three-fold symmetry then implies that the surface
$y=F(r)$ has a ``tricorn'' shape. 

\begin{figure}
\centerline{\includegraphics[width=6cm,height=6cm]{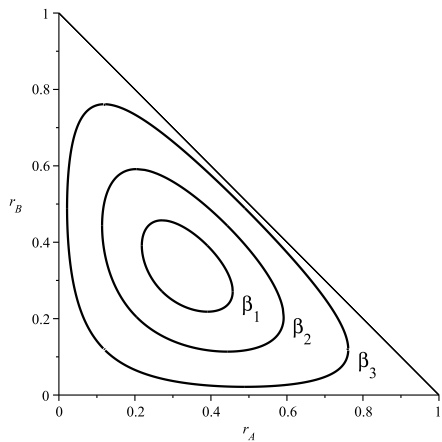}}
\caption{Curves $\Gamma_\beta$ in the $r_A$-$r_B$ plane along which $F(r)$
achieves its minimum value, for $\beta_1=3.75$, $\beta_2=4.25$, and
$\beta_3=6.05$ ($\hat\beta_c=2\pi/\sqrt3\simeq 3.63$).} \label{fig:ccurves}
\end{figure}

\section{Concluding remarks}\label{sec:conclude}

It is natural to compare the phase diagram obtained here for the one
dimensional reflection asymmetric ABC model with that of the corresponding
symmetric model, that is, the mean field three state Potts model (see
\cite{SF},\cite{MS}).  We will define the latter by replacing the sum over
$j>i$ in \eqref{eq:EN} by a sum over all $j\ne i$ and dividing by 2; this
yields
 \be\label{eq:symEN}
 E_N^*(\zetau)=\frac1{2N}[N_AN_C+N_CN_B+N_BN_A]
   =\frac1{4N}[N^2-(N_A^2+N_B^2+N_C^2)].
 \ee
 The energy \eqref{eq:symEN} is related to that of the standard mean field
 Potts model \cite{MS} by a choice of energy scale and a shift of the
 ground state energy.  It is, as is usual for mean field models,
 independent of dimension and geometry.  
 There is thus no spatial structure in the system and the
 canonical measure just gives equal weights to all configurations.

The canonical free energy functional with prescribed values of
$r_\alpha=\int_0^1n_\alpha(x)\,dx$ is
 \be
  \F^*(\{n\})=\frac\beta2[r_Ar_C+r_Br_A+r_Cr_B]-\S(\{n\}),
 \ee
 with $\S\left(\{n\}\right)$ still given by \eqref{eq:S}.  For all $\beta$
the minimizers of $\F^*$ are the constant density profiles
$\rho_\alpha(x)=r_\alpha$, and there are no phase transitions of any kind
in the canonical system.  The corresponding minimum value
 \be
 F^*(r)=\frac\beta2\sum_\alpha r_\alpha r_{\alpha+2}
   -\sum_\alpha r_\alpha\log r_\alpha
  =\frac\beta4\sum_\alpha r_\alpha^2 +\frac\beta4
   -\sum_\alpha r_\alpha\log r_\alpha
 \ee
 of $\F^*(\{n\})$ is in fact just the value of $\F(\{n\})$ evaluated at
 these constant profiles (this follows from our choice of the factor $1/2$
 in \eqref{eq:symEN}), and hence is an upper bound for the free energy $F(r)$ of
 \eqref{eq:cfe}.

The situation is quite different for the grand canonical ensemble.  Here
the analogue of \eqref{eq:gcfe} is
 \be\label{eq:symgcfe}
 \hat F^*(\lambda)=\inf_r\,
   \Bigl\{F^*(r)-\sum_\alpha\lambda_\alpha r_\alpha\Bigr\}
 \ee
 The analysis of $F^*(\lambda)$ leads to a phase diagram completely
 different from that of the reflection asymmetric grand canonical model
 considered in Sections~3 and 4 above \cite{SF}.  In particular, \eqref{eq:symgcfe}
 exhibits a first order phase transition for
 $\lambda_A=\lambda_B=\lambda_C$ at $\beta_c^*=8\log2$.  For
 $\beta<\beta_c^*$ the minimizer is $r_A=r_B=r_C=1/3$; for
 $\beta>\beta_c^*$ there are three minimizers, each rich in one of the
 three species, and at $\beta=\beta_c^*$ all four of these states are
 minimizers.

\subsection{Higher dimensions\label{sec:hd}}

As was already noted and is well known, the standard mean field models with
symmetric interactions do not depend on the dimension or topology of the
spatial structure of the system considered.  This is clearly not the case
for models with reflection asymmetric interactions, such as the
one-dimensional ABC model considered in this paper.

We comment now on various possible generalizations of such reflection
asymmetric mean field models to higher dimensions, taking for simplicity
the dimension to be two and the lattice to be an $N\times N$ square in
$\bZ^2$.  Let us consider first a situation in which the mean field
interactions are symmetric in the vertical direction but of the form
\eqref{eq:EN} in the horizontal direction. This yields an energy of the
form
\begin{eqnarray} \label{eq:ENbar}
\tilde E(\zetau)  &=& {\frac{1}{N^2}} \sum_{k,l}^N
\sum_{i=1}^{N-1}\sum_{j=i+1}^N\sum_\alpha
   \eta_\alpha (i,k) \eta_{\alpha+2}(j,l)\\
  &=&\sum_{i=1}^{N-1}\sum_{j=i+1}^N\sum_\alpha
   \tilde\eta_\alpha(i)\tilde\eta_{\alpha+2}(j),\label{eq:ENbar2}
\end{eqnarray}
 where
 \be\label{eq:etatilde}
\tilde\eta_\alpha(i)=\frac1N\sum_{k=1}^N\eta_\alpha(i,k).
 \ee
 The energy
functional $\tilde\E$ obtained from \eqref{eq:ENbar2} in the scaling limit
is identical to that given in \eqref{eq:E} with $n_\alpha(x)$ replaced by
$\tilde n_\alpha(x)=\int_0^1n(x,y)\,dy$.  The entropy term (compare
\eqref{eq:S}),
\be
 -\tilde \S=\sum_\alpha\int_0^1\int_0^1
    n_\alpha(x,y)\log n_\alpha(x,y)\,dx\,dy,
 \ee
 is clearly minimized, subject to a specified $\{\tilde n_\alpha(x)\}$, by
setting $n_\alpha(x,y)=\tilde n_\alpha(x)$, and so density profiles which
minimize $\beta\tilde\E-\tilde\S$ depend only on $x$ and are the same as
for the one dimensional case, both for the canonical and grand canonical
ensembles.

\begin{rem} \label{rem:chiral} The two-dimensional chiral clock model
\cite{O,H,AP} also contains interactions---in that case, nearest-neighbor
ones---which are reflection symmetric in the vertical direction but not in
the horizontal one.  When the parameter $\Delta$ (in the notation of
\cite{O}) has value $1/2$ the energy, up to an additive constant and a
rescaling, is
 \be
\sum_{i=1}^{N-1}\sum_{k=1}^N\sum_\alpha
   \eta_\alpha (i,k) \eta_{\alpha+2}(i+1,k)
    - \sum_{i=1}^N\sum_{k=1}^{N-1}\sum_\alpha
   \eta_\alpha (i,k) \eta_\alpha(i,k+1)
  ,\label{eq:ENchiral}
 \ee
 so that the interactions in the horizontal direction have a form
 reminiscent of \eqref{eq:EN}. 
\end{rem}

A second possibility is to take  the reflection asymmetry to be  the same in
the $x$ and $y$ directions.  In this case \eqref{eq:EN} takes the form
\be
E_{N^2}(\zetau)  = {\frac{1}{N^2}} \sum_\alpha\sum^{N-1}_{i,k=1}
\sum^{N}_{\atop{j=i+1}{l=k+1} }
   \eta_\alpha (i,k) \eta_{\alpha+2}(j,l). \label{eq:ENtilde}
\ee
The analysis of this model seems considerably more complicated and we will
attempt no discussion here.

\medskip
\noindent {\bf Acknowledgments:} We thank Lorenzo Bertini, Thierry
Bodineau, Eric Carlen, Or Cohen, Bernard Derrida, and David Mukamel for
helpful discussions.  The work of J.B.~and J.L.L.~was supported by NSF
Grant DMR-0442066 and AFOSR Grant AF-FA9550-04.

\appendix

\section{Proof of \cthm{thm:main}(b)}\label{sec:proof}

We begin by giving an alternate form of the boundary conditions
\eqref{eq:BC}.  With \eqref{eq:solns} and \eqref{eq:phases} these become
 \be
\lambda_\alpha-\lambda_{\alpha+1}
   =\log\bigl(y_K(t_\alpha-\beta)\bigr)
      -\log\bigl(y_K(t_\alpha+\beta-\tau_K/3)\bigr).
 \ee
 From \eqref{eq:ABC} and \eqref{eq:solns},
$(\log y_K(t))'=[y_K(t+\tau_K/3)-y_K(t-\tau_K/3)]/2$, so that 
 \be\label{eq:newbc}
\lambda_\alpha-\lambda_{\alpha+1}
   =\frac12\int_{t_\alpha+\beta-\tau_K/3}^{t_\alpha-\beta}
  \left[y_K\left(t+\frac{\tau_K}3\right)
    -y_K\left(t-\frac{\tau_K}3\right)\right]\,dt.
 \ee
 The solution of \eqref{eq:newbc} which also satisfies
   $\sum_\alpha\lambda_\alpha=0$ is
 \be\label{eq:newbca}
\lambda_\alpha=
  \frac12\int_{s_\alpha-(\tau_K/6-\beta)}^{s_\alpha+(\tau_K/6-\beta)}
   \left(\frac13-y_K(t)\right)\,dt,
 \ee
 where $s_\alpha=t_\alpha+\tau_K/2$.  The form \eqref{eq:newbca} is
 convenient when $2\beta\le\tau_K/3$; if  $2\beta\ge\tau_K/3$ we may
 rewrite this as
 \be\label{eq:newbcb}
\lambda_\alpha=
  \frac12\int_{s_\alpha-(\beta-\tau_K/6)}^{s_\alpha+(\beta-\tau_K/6)}
   \left(y_K(t)-\frac13\right)\,dt.
 \ee
 The representations \eqref{eq:newbca} and \eqref{eq:newbcb} are useful
because they translate the boundary conditions for the grand canonical
model into a form similar to the condition \eqref{eq:ra} in the canonical
model.

We need also to recall from \cite{ACLMMS} some further properties of the
function $y_K(t)$ and its definite integrals
 \be\label{WD}
 Y(K,s,\delta)=\int_{s-\delta}^{s+\delta}y_K(t)\,dt \quad\hbox{and}\quad
 W(K,s,\delta)=\int_{s-\delta}^{s+\delta}y_K\left(t+\frac{\tau_K}3\right)\,dt.
 \ee
 Note that from  \eqref{eq:ra}, 
 \be\label{eq:rax}
r_\alpha=\frac1{2\beta}Y(K,t_\alpha,\beta) 
 \ee
 and that from \eqref{eq:phases},
 \begin{eqnarray}\label{eq:lax1}
 \lambda_\alpha
  &=&\frac{\delta}{3}-\frac12 Y(K,s_\alpha,\delta)
  =\frac{\delta}{3}-\frac12 W(K,s_{\alpha+1},\delta),
 \end{eqnarray}
 for $\delta=\tau_K/6-\beta\ge0$, while for $\delta'=-\delta>0$,
 \begin{eqnarray}\label{eq:lax2}
 \lambda_\alpha
  &=&\frac12 Y(K,s_\alpha,\delta')-\frac{\delta'}{3}
  =\frac12 W(K,s_{\alpha+1},\delta')-\frac{\delta'}{3}.
 \end{eqnarray}

\begin{prop}\label{recall} For $0<K<1/27$:

\noindent
 (a) (i) $y_K(t)$ is even and $\tau_K$-periodic (and hence also symmetric
about $t=\tau_K/2$), takes its minimum value at $t=0$, is strictly
increasing on $[0,\tau_K/2]$, and takes its maximum value at $t=\tau_K/2$.
Moreover, (ii)~$y_K(t-\tau_K/3)+y_K(t)+y_K(t+\tau_k/3)=1$ for all $t$.

 \smallskip\noindent
 (b) The minimum value $a=a(K)=y_K(0)$ of $y_K$ is an increasing function
of $K$ satisfying $0<a(K)<1/3$.  The maximum value $b=b(K)=y_K(\tau_K/2)$
is $b=[2-a-\sqrt{4a-3a^2}]/2$, and $y_K(\tau/6)=(1-b)/2$,
$y_K(\tau/3)=(1-a)/2$.

 \smallskip\noindent
 (c) (i) For fixed $K$ and $\delta$, with $0<\delta<\tau_K/2$, the function
$Y(K,t,\delta)$ shares with $y_K(t)$ the properties listed in (a.i).
Moreover, (ii)
 \be
Y(K,t-\tau_K/3,\delta)+Y(K,t,\delta)+Y(K,t+\tau_k/3,\delta)=2\delta.
 \ee

 \smallskip\noindent
  (d) For $0<\delta<\tau_k/2$, $Y(K,0,\delta)$ is strictly decreasing, and
$W(K,\tau_k/6,\delta)$ strictly increasing, in $K$.

 \smallskip\noindent
Finally, for $0<K_2<K_1<1/27$:

\noindent (e) (i) For any $t_0$ the curves $y_{K_1}(t_0+t)$ and
$y_{K_2}(t)$ intersect exactly once in the interval
$0\le t\le \tau_{K_2}/2$, and (ii)~$y_{K_2}(t)<y_{K_1}(t)$ and
$y_{K_2}(\tau_{K_2}/2-t)>y_{K_1}(\tau_{K_1}/2-t)$ for
$0\le t\le \tau_{K_1}/6$.  \end{prop}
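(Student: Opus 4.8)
The plan is to reduce everything to the oscillator equation \eqref{eq:osc}, the evenness and $\tau_K$-periodicity of $y_K$ recalled in \crem{rem:yk} (from \cite{ACLMMS}), and one evaluation trick. Part (a.i) is essentially \crem{rem:yk}: an even $\tau_K$-periodic function is automatically symmetric about $\tau_K/2$, and with strict monotonicity on $[0,\tau_K/2]$ this fixes the locations of the minimum and maximum. For (a.ii), note that the relation $(\log y_K)'(t)=\tfrac12\bigl[y_K(t+\tau_K/3)-y_K(t-\tau_K/3)\bigr]$ displayed just before \eqref{eq:newbc} says exactly that the translates $\rho_A(t)=y_K(t+\tau_K/3)$, $\rho_B(t)=y_K(t)$, $\rho_C(t)=y_K(t-\tau_K/3)$ satisfy $2\rho_\alpha'=\rho_\alpha(\rho_{\alpha-1}-\rho_{\alpha+1})$ for every $\alpha$; consequently $\sigma:=\rho_A+\rho_B+\rho_C$ and $\rho_A\rho_B\rho_C$ are constants, the latter equal to the $K$ of \eqref{eq:osc}. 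Evaluating at $t=0$: there $y_K'(0)=0$, so \eqref{eq:osc} gives $a(1-a)^2=4K$ with $a:=y_K(0)$, while evenness gives $\rho_A(0)=\rho_C(0)=y_K(\tau_K/3)$; writing $p:=y_K(\tau_K/3)$ we obtain $K=ap^2$, hence $4p^2=(1-a)^2$, so $p=(1-a)/2$ and then $\sigma=a+2p=1$. This proves (a.ii) and, as a byproduct, $y_K(\tau_K/3)=(1-a)/2$; the identical computation at $t=\tau_K/2$ (where $y_K$ attains its maximum $b$) yields $y_K(\tau_K/6)=(1-b)/2$.

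For the remaining assertions of (b): $a$ and $b$ are the two roots in $(0,1)$ of $y(1-y)^2=4K$ that bracket $1/3$ (the third root exceeds $1$ and belongs to an unbounded, irrelevant orbit), so $a\in(0,1/3)$; from $K=\tfrac14a(1-a)^2$ we get $dK/da=\tfrac14(1-a)(1-3a)>0$ on $(0,1/3)$, so $a$ is strictly increasing in $K$ and sweeps $(0,1/3)$ as $K$ sweeps $(0,1/27)$ (in particular $b=b(K)$ is strictly decreasing in $K$). Dividing $y^3-2y^2+y-4K$ by $(y-a)$ leaves $y^2+(a-2)y+(1-a)^2$, whose root in $(1/3,1)$ is $b=\tfrac12\bigl[(2-a)-\sqrt{4a-3a^2}\bigr]$. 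Part (c.ii) is simply the integral over $[t-\delta,t+\delta]$ of the identity in (a.ii). In (c.i), $Y(K,\cdot,\delta)$ inherits evenness and $\tau_K$-periodicity, hence symmetry about $\tau_K/2$, from $y_K$; and $\partial_tY(K,t,\delta)=y_K(t+\delta)-y_K(t-\delta)$, which for $0<t<\tau_K/2$ and $0<\delta<\tau_K/2$ is strictly positive: writing $d(s)$ for the distance from $s$ to $\tau_K\bZ$, one has $y_K(s)=y_K(d(s))$ with $y_K$ strictly increasing in $d$ on $[0,\tau_K/2]$, and a short case check on the signs of $t-\delta$ and $t+\delta-\tau_K/2$ shows $d(t+\delta)>d(t-\delta)$ throughout that range. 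Finally (d) follows by integrating the pointwise comparison of (e.ii) over the appropriate window (using (a.ii)/(c.ii) if one needs the full range $0<\delta<\tau_K/2$).

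The substantive part is (e). Both $y_{K_1}$ and $y_{K_2}$ satisfy $(y')^2=P_K(y)$ with $P_K(y):=\tfrac14y^2(1-y)^2-Ky$, and since $P_{K_2}(y)-P_{K_1}(y)=(K_1-K_2)\,y>0$ for $y>0$, at any common height the $K_2$-orbit strictly outpaces the $K_1$-orbit. For (e.i): on $[0,\tau_{K_2}/2]$ the function $y_{K_2}$ increases bijectively from $a(K_2)$ to $b(K_2)$, and since $a(K_2)<a(K_1)<b(K_1)<b(K_2)$ (by (b)) there are $0<t^*<t^{**}<\tau_{K_2}/2$ with $y_{K_2}(t^*)=a(K_1)$ and $y_{K_2}(t^{**})=b(K_1)$; outside $[t^*,t^{**}]$ the difference $h(t):=y_{K_2}(t)-y_{K_1}(t_0+t)$ is negative (before $t^*$) or positive (after $t^{**}$), so all intersections lie in $[t^*,t^{**}]$. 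Moreover $t^{**}-t^*=\int_{a(K_1)}^{b(K_1)}dy/\sqrt{P_{K_2}(y)}<\int_{a(K_1)}^{b(K_1)}dy/\sqrt{P_{K_1}(y)}=\tau_{K_1}/2$, so on $[t^*,t^{**}]$ the function $y_{K_1}(t_0+\cdot)$ cannot traverse a full half-period. At any interior zero $t_1$ of $h$ one has $y_{K_2}(t_1)=y_{K_1}(t_0+t_1)=:y_1\in(a(K_1),b(K_1))$, hence $y_{K_2}'(t_1)=\sqrt{P_{K_2}(y_1)}>\sqrt{P_{K_1}(y_1)}\ge|y_{K_1}'(t_0+t_1)|$, so $h'(t_1)>0$: $h$ crosses $0$ only upward, hence at most once, and the sign change between $t^*$ and $t^{**}$ makes it exactly once. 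For (e.ii), apply (e.i) with $t_0=0$: $h(0)=a(K_2)-a(K_1)<0$, while at $t=\tau_{K_1}/6$ the formula from (b) together with $b(K_2)>b(K_1)$ and the monotonicity of $y_{K_2}$ give $y_{K_2}(\tau_{K_1}/6)<(1-b(K_2))/2<(1-b(K_1))/2=y_{K_1}(\tau_{K_1}/6)$, so the unique crossing lies past $\tau_{K_1}/6$ and $y_{K_2}<y_{K_1}$ on $[0,\tau_{K_1}/6]$; the second inequality of (e.ii) is the same argument after reflecting about the respective maxima (equivalently, (e.i) with $t_0=\tau_{K_1}/2-\tau_{K_2}/2$ and the relabeling $t\mapsto\tau_{K_2}/2-t$), using $y_K(\tau_K/3)=(1-a(K))/2$.

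The step I expect to be the main obstacle is the ``exactly one crossing'' count in (e.i): one must rule out tangential contacts and repeated crossings near the turning points, where $\sqrt{P_K(\cdot)}$ is only H\"older and $y_{K_j}'$ vanishes, and carefully treat the degenerate boundary cases $h(t^*)=0$ or $h(t^{**})=0$ (which are handled by noting that $y_{K_1}'$ also vanishes there, so $h'>0$ still holds at those endpoints). Once (e.i) is established, the rest---(e.ii), (d), (c.i), and the quantitative half of (b)---is routine bookkeeping with the explicit formulas produced in the (a.ii)/(b) step.
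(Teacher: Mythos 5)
Your route is genuinely different from the paper's. The paper proves \cprop{recall} almost entirely by citation to \cite{ACLMMS} (Section~5.2 and Remark~5.1(a) for (a),(b); Remark~5.3(b) for (c.i); Theorem~6.1 for (d); Lemma~6.2(a) for (e.i)), deriving only (c.ii) and (e.ii) in-text; you instead re-derive everything from the first integral \eqref{eq:osc}. Your (b), (c) and (e) are correct and self-contained: the single-crossing argument for (e.i) --- comparing $(y')^2=P_K(y)$ for the two parameters, trapping all zeros of the difference between the levels $a(K_1)$ and $b(K_1)$, and showing every zero (including the degenerate endpoint cases) is an upward crossing --- is an independent proof of what the paper imports as Lemma~6.2(a) of \cite{ACLMMS}, and your deduction of (e.ii) from (e.i) uses essentially the same endpoint inequalities the paper lists. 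What this buys is an appendix that does not lean on \cite{ACLMMS} beyond \crem{rem:yk}; what it costs is that you must supply details the paper simply cites, and two of those details are not yet in place.

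First, (a.ii): you start from $(\log y_K)'(t)=\frac12[y_K(t+\tau_K/3)-y_K(t-\tau_K/3)]$, but in the paper that display is obtained from \eqref{eq:ABC} and \eqref{eq:solns}, i.e.\ from the very \cite{ACLMMS} package (translates by a third-period forming a solution triple summing to $1$) that (a.ii) is recalling, so as an input it is close to circular; and within your own argument the claim that the constant product $\Pi=\rho_A\rho_B\rho_C$ equals the $K$ of \eqref{eq:osc} is asserted, not proved --- constancy alone does not fix its value, and your evaluation at $t=0$ uses $\Pi=K$ to conclude $\sigma=1$. The repair is short: square $2y_K'=y_K(\rho_A-\rho_C)$ and compare with \eqref{eq:osc} to get $y_K(\sigma-1)(\sigma+1-2y_K)=4(\Pi-K)$ identically in $t$, which forces $\sigma=1$ and $\Pi=K$ because $y_K$ is nonconstant. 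Second, (d): integrating (e.ii) handles only $0<\delta\le\tau_{K_1}/6$, and the parenthetical appeal to (a.ii)/(c.ii) is not an argument for the stated range $0<\delta<\tau_K/2$. It can be completed --- by (e.i) the difference $y_{K_1}-y_{K_2}$ changes sign once on $[0,\tau_{K_2}/2]$, so positivity of $\int_0^\delta(y_{K_1}-y_{K_2})$ reduces to the case $\delta=\tau_{K_1}/2$, where one needs that the average of $y_{K_2}$ over a window of length $\tau_{K_1}$ centered at its minimum is at most its period average $1/3$ --- or replaced by the paper's route (continuity in $K$, the limits $0$ and $2\delta/3$, and uniqueness of $K$ from Theorem~6.1 of \cite{ACLMMS}). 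Note finally that what your integration yields --- $Y(K,0,\delta)$ strictly increasing and $W(K,\tau_K/6,\delta)$ strictly decreasing in $K$ --- is the opposite of the literal wording of (d); that wording appears to be a transposition slip in the paper, since the paper's own proof of (d), \clem{monotone}(c) at $s=0$, and \eqref{eq:ineq1a} all give the directions you obtain, so you should state and prove (d) in that corrected form.
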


 \smallskip\noindent
 \begin{proof} These results either appear in \cite{ACLMMS} or are
immediate consequences of results appearing there.  For (a) and (b) see
Section~5.2 of \cite{ACLMMS} and in particular Remark~5.1(a); for (c.i) see
Remark~5.3(b). (c.ii) follows from (a.ii).  The first statement of (d)
follows from the fact that $Y(K,0,\delta)$ is continuous in $K$ and, for
$0<\delta<\tau_k/2$, approaches $2\delta/3$ as $K\nearrow1/27$ and $0$ as
$K\searrow0$, together with Theorem 6.1 of \cite{ACLMMS} which, if one
takes there $r_A=r_C$, asserts that for given $r_B$ with $0<r_B<1/3$ there
is at most one value of $K$ satisfying \eqref{eq:rax}.  The second
statement of (d) is verified similarly.  Finally, (e.i) is a special case
of Lemma~6.2(a) of \cite{ACLMMS} and (e.ii) then follows from (e.i) and the
inequalities $y_{K_2}(0)<y_{K_1}(0)$,
$y_{K_2}(\tau_{K_1}/6)< y_{K_2}(\tau_{K_2}/6)< y_{K_1}(\tau_{K_1}/6)$,
$y_{K_2}(\tau_{K_2}/2)>y_{K_1}(\tau_{K_1}/2)$, and
$y_{K_2}(\tau_{K_2}/2-\tau_{K_1}/6)>y_{K_2}(\tau_{K_2}/3)
>y_{K_1}(\tau_{K_1}/3)$,
easily obtained from the properties given in (a) and (b).  \end{proof}

We now turn to the proof of \cthm{thm:main}(b).  We know (see the remarks
at the beginning of Section~\ref{sec:scaling}) that there is at least one
minimizer and that every minimizer satisfies the ELE \eqref{eq:ABC2},
\eqref{eq:BC}.  Thus the conclusion of the theorem will follow
from:

\begin{lem}\label{lem:mainl} If $\lambda_A$, $\lambda_B$, and $\lambda_C$
  are not all equal then:

 \smallskip\noindent
 (a) No solution of \eqref{eq:ABC2}, \eqref{eq:BC} of type $n$, $n\ge2$,
can minimize $\Fh$.

 \smallskip\noindent
 (b) At most one solution of \eqref{eq:ABC2}, \eqref{eq:BC} of type 1
exists.  \end{lem}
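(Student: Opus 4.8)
The plan is to treat both parts via the parametrization of solutions of \eqref{eq:ABC2}, \eqref{eq:BC} by the elliptic parameter $K$ together with a single phase, and via the integrated forms \eqref{eq:newbca}, \eqref{eq:newbcb} of the boundary conditions, which cast them in a shape parallel to the density condition \eqref{eq:rax} of the canonical problem. For part (a) I would reduce to the canonical minimization problem. For any admissible profile $n$ with mean densities $r_\alpha=\int_0^1 n_\alpha(x)\,dx$ one has $\Fh(\{n\})=\F(\{n\})-\sum_\alpha\lambda_\alpha r_\alpha$, where the last term depends only on $r$; hence, by \eqref{eq:cfe} and \eqref{eq:gcfe2}, $\Fh(\{n\})\ge F(r)-\sum_\alpha\lambda_\alpha r_\alpha\ge\hat F(\lambda)$, the first inequality being strict unless $n$ minimizes $\F$ among profiles of density $r$. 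Thus a solution $\rho$ of type $n\ge2$ cannot minimize $\Fh$ as soon as it is not the $\F$-minimizer at its own density $r(\rho)$. By \cthm{thm:oldmain} the $\F$-minimizer at a fixed density is either the constant profile or the unique canonical type-$1$ solution; a type-$n$ solution with $n\ge2$ is nonconstant, and, since type $n$ forces $2\beta>(3n-5)\tau_K/3$, for $n\ge3$ it has $2\beta>\tau_K$ and so is of canonical type $\ge2$. This settles every case except type-$2$ solutions $\rho$ with $2\beta\le\tau_K$.

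For such a $\rho$ one has equality $\F(\rho)=F(r(\rho))$, so $\Fh(\rho)=F(r(\rho))-\sum_\alpha\lambda_\alpha r_\alpha(\rho)$; since $\rho$ is a critical point of $\Fh$, the density $r(\rho)$ is a critical point of $r\mapsto F(r)-\sum_\alpha\lambda_\alpha r_\alpha$, and what is needed is that it is not the minimum, i.e.\ that the surface $y=F(r)$ lies strictly above the plane $y=\hat F(\lambda)+\sum_\alpha\lambda_\alpha r_\alpha$ at $r=r(\rho)$. The condition $2\beta/\tau_K>1/3$ puts $K$ in the range $K>K_0$, where $\tau_{K_0}=6\beta$, so that $r(\rho)$ lies strictly inside the curve $\Gamma_\beta$ of Section~\ref{sec:cfe} --- the region in which $F$ strictly exceeds its convex hull. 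Verifying this, i.e.\ that a canonical type-$1$ solution with $\tau_K/3<2\beta\le\tau_K$ sits strictly above the graph of $\hat F$ for its induced $\lambda$, is the step I expect to be the main obstacle in (a).

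For part (b), note that a type-$1$ solution has $2\beta\le\tau_K/3$, hence $\delta:=\tau_K/6-\beta\in[0,\tau_K/2)$, and by \eqref{eq:newbca} (equivalently \eqref{eq:lax1}) it is determined by the pair $(K,s_B)$ through $\lambda_\alpha=\delta/3-\tfrac12 Y(K,s_\alpha,\delta)$ with $s_A=s_B+\tau_K/3$, $s_C=s_B-\tau_K/3$, the normalization $\sum_\alpha\lambda_\alpha=0$ holding automatically by \cprop{recall}(c.ii). It suffices to show this map is injective on the type-$1$ region. If the $\lambda_\alpha$ are not all equal then $\delta>0$ (otherwise $Y(K,\cdot,0)\equiv0$ and every $\lambda_\alpha$ vanishes), so by \cprop{recall}(c.i) $Y(K,\cdot,\delta)$ is even, $\tau_K$-periodic and strictly unimodal. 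First, for fixed $K$: since a strictly unimodal periodic function takes each value at most twice, a short argument from evenness shows that $s_B\mapsto\bigl(Y(K,s_B+\tau_K/3,\delta),\,Y(K,s_B,\delta),\,Y(K,s_B-\tau_K/3,\delta)\bigr)$ is injective on $\bR/\tau_K\bZ$, so $\lambda$ determines $s_B$. Second, for the $K$-dependence: using the cyclic and reflection symmetries I would assume $\lambda_A\le\lambda_C\le\lambda_B$ (the cases with equalities being easier), so that $Y(K,s_A,\delta)\ge Y(K,s_C,\delta)\ge Y(K,s_B,\delta)$ and, with the fixed spacing $\tau_K/3$, the representative phase is pinned to $s_B(K)\in[0,\tau_K/6]$ as a continuous function of $K$; substituting into $\lambda_B=\delta/3-\tfrac12 Y(K,s_B(K),\delta)$ then reduces injectivity in $K$ to the strict monotonicity in $K$ of a single explicit function. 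That monotonicity is the main obstacle in (b); I expect it to follow, exactly as Theorem~6.1 of \cite{ACLMMS} does in the canonical setting, from the monotonicity in $K$ of $Y(K,0,\delta)$ and $W(K,\tau_K/6,\delta)$ (\cprop{recall}(d)) together with the single-intersection comparison \cprop{recall}(e).
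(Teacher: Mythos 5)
Your proposal identifies the right parametrization but leaves unproven precisely the two steps that constitute the paper's actual argument, and in both cases you flag them yourself as ``the main obstacle,'' so these are genuine gaps rather than omitted routine details. For part (a), your reduction $\Fh(\{n\})\ge F(r)-\sum_\alpha\lambda_\alpha r_\alpha\ge\hat F(\lambda)$ does dispose of the cases $2\beta>\tau_K$ (there the solution is of canonical type $\ge2$, hence not the canonical minimizer at its own density, so the first inequality is strict). But in the remaining case $\tau_K/3<2\beta\le\tau_K$ the first inequality is an equality, and your plan requires showing that $r(\rho)$ lies strictly inside $\Gamma_\beta$, i.e.\ that $F$ strictly exceeds its supporting plane there. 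That claim is not established by anything you cite: it compares mean densities of a $y_K$-profile averaged over a window of length $2\beta\in(\tau_K/3,\tau_K)$ with the densities on $\Gamma_\beta$, which come from a \emph{different} elliptic parameter $K_1$ (with $\tau_{K_1}=6\beta$), and proving it would need comparison lemmas of the same strength as those in Appendix~\ref{sec:proof}. The paper avoids this entirely: for $\tau_K/3<2\beta<\tau_K$ it uses \crem{rem:order} (which gives $r_B\le r_C\le r_A$ while $\lambda_A\le\lambda_C\le\lambda_B$, with two strict inequalities) and the cyclically relabeled profile $\tilde\rho_\alpha=\rho_{\alpha+1}$, which preserves $\F$ but strictly lowers the chemical-potential term as in \eqref{eq:lamr}; only the degenerate case $2\beta=\tau_K$ (where $r(\rho)=(1/3,1/3,1/3)$, which \emph{is} trivially inside $\Gamma_\beta$) is handled by the $\Gamma$ argument you sketch.

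For part (b), your fixed-$K$ injectivity in $s_B$ is fine but beside the point: two distinct type-1 solutions with the same $\lambda$ would in general have different $K$'s, and then also different $\delta$'s, since $\delta=\tau_K/6-\beta$ varies with $K$. The whole content of the uniqueness proof is therefore the monotonicity statement you defer: one needs that $2\delta/3-Y(K,s,\delta)$ and $2\delta/3-W(K,s,\delta)$ respond with opposite strict monotonicity to changes in $(K,\delta)$ for \emph{all} $s\in[0,\tau_K/6]$ (not just $s=0$ and $s=\tau_K/6$), together with monotonicity in $\delta$ to absorb the $K$-dependence of $\delta$. This is exactly \clem{monotone}(a,c) of the paper, whose proof is the technical core of the appendix (it uses \cprop{recall}(e.i) to propagate a single crossing, the identity $Y(K,\tau_K/6,\delta)=\delta-\tfrac12W(K,\tau_K/6,\delta)$, and $\tau_{K_1}<\tau_{K_2}$); given it, the paper gets a clean contradiction by deducing $s_1<s_2$ from the $Y$-equations and $s_1>s_2$ from the $W$-equations. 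Your appeal to \cprop{recall}(d) and Theorem~6.1 of \cite{ACLMMS} is a plausible starting point but does not by itself yield monotonicity at interior $s$, nor does your scheme of a continuous branch $s_B(K)$ (whose existence and continuity are themselves unjustified) substitute for it. So both (a) and (b) remain open as written.
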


\begin{rem}\label{rem:order} In proving \clem{lem:mainl} we need not
consider either the constant solution of the ELE or nonconstant solutions
for which $2\beta=(n-2)\beta_c/3$, both of which satisfy \eqref{eq:BC} only
when all the $\lambda_\alpha$ are equal.  We may also suppose, without loss
of generality, that
 \be\label{eq:lamorder}
 \lambda_A\le\lambda_C\le\lambda_B, 
  \quad\hbox{ with $\lambda_A<\lambda_C$ or $\lambda_C<\lambda_B$.}
 \ee
 If $\delta=\tau_K/6-\beta>0$ it then follows from \eqref{eq:lax1} that
$Y(K,s_B,\delta)\le Y(K,s_C,\delta)\le Y(K,s_A,\delta)$ and then from
Proposition~\ref{recall}(c.i) (see Figure~\ref{fig:Y}, which displays
graphically the qualitative properties of $Y(K,s,\delta)$ implied there)
that $0\le s_B\le\tau_K/6$, so that $\tau_K/2\le t_B\le2\tau_K/3$ and
hence, from Proposition~\ref{recall}(c.i) and \eqref{eq:rax}, that
$r_A\le r_C \le r_B$.  If $\lambda_A<\lambda_C$ then $s_B>0$,
$t_B>\tau_K/2$, and $r_A<r_C$; similarly $r_C<r_B$, if
$\lambda_C<\lambda_B$.  Similarly, if $\tau_K/6<\beta<\tau_K/2$ then (now
using \eqref{eq:lax2}) $\tau_K/2\le s_B\le2\tau_K/3$ and
$r_B\le r_C \le r_A$, again with strict inequality for two of the
$\lambda_\alpha$ implying the corresponding inequality for the
$r_\alpha$.\end{rem}
 
\begin{figure}
\centerline{\includegraphics[width=12cm,height=6cm]{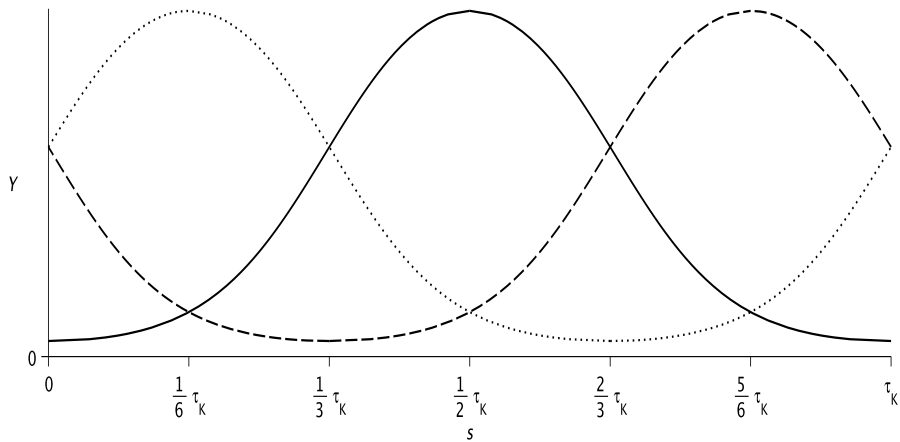}}
\caption{Plots showing qualitative features of $Y(K,s,\delta)$ (solid),
$Y(K,s+\tau_K/3,\delta)$ (dotted), and $Y(K,s-\tau_K/3,\delta)$ (dashed)
for $0<\delta<\tau_K/2$, based on \cprop{recall}(c.1).} 
\label{fig:Y}
\end{figure}

\begin{proofof}{\clem{lem:mainl}(a)} Consider some type $n$ solution
$\rho(x)$, $n\ge2$, of \eqref{eq:ABC2}, \eqref{eq:BC}; $\rho(x)$ has the
form \eqref{eq:solns} with $2\beta>\tau_K/3$.  We need to find a profile
$\tilde\rho(x)$ with $\Fh(\{\tilde\rho\})<\Fh(\{\rho\})$.  There are three
subcases:

 \smallskip\noindent
 {\bf Case (a.i)~$2\beta>\tau_K$.} In this case it was shown in
\cite{ACLMMS} that there is a {\it rearrangement} $\tilde\rho(x)$ of
$\rho(x)$ with $\F(\{\tilde\rho\})<\F(\{\rho\})$.  This
rearrangement does not change the mean densities $r_\alpha$ and hence also
$\Fh(\{\tilde\rho\})<\Fh(\{\rho\})$.
 \smallskip\noindent

 \smallskip\noindent
 {\bf Case (a.ii)~$2\beta=\tau_K$.} In this case the solution $\rho(x)$ has
mean densities $r_\alpha=1/3$, so that $\sum\lambda_\alpha r_\alpha=0$.
From the description of the curve $\Gamma$ in Section~\ref{sec:cfe} it
follows that for some $z>0$ there exists a minimizer $\tilde\rho(x)$ of
$\Fh^{(\beta,0)}$ with mean densities
$\tilde r_\alpha=1/3+z\lambda_\alpha$, so that
$\sum\lambda_\alpha \tilde r_\alpha>0$.  But then
 \be
 \Fh^{(\beta,\lambda)}(\{\rho\})= \Fh^{(\beta,0)}(\{\rho\})
   > \Fh^{(\beta,0)}(\{\tilde\rho\})> \Fh^{(\beta,\lambda)}(\{\tilde\rho\}).
 \ee

 \smallskip\noindent
 {\bf Case (a.iii)~$\tau_K>2\beta>\tau_K/3$.} By \crem{rem:order},
$r_B\le r_C\le r_A$,  with $r_B< r_C$ if $\lambda_C<\lambda_B$
 and $r_C<r_A$ if $\lambda_A<\lambda_C$.  Consider now the
profile $\tilde\rho$ with $\tilde\rho_\alpha(x)=\rho_{\alpha+1}(x)$.  The
canonical free energy functional satisfies
$\F(\{\tilde\rho\})=\F(\{\rho\})$ and so
 \begin{align}\label{eq:lamr}
\Fh(\{\tilde\rho\})-\Fh(\{\rho\})
  &= \sum_\alpha\lambda_\alpha r_\alpha
   -\sum_\alpha\lambda_\alpha r_{\alpha+1}\cr
 &\hskip-40pt
 =(\lambda_A-\lambda_C)(r_A-r_B)+(\lambda_B-\lambda_C)(r_B-r_C)<0.
 \end{align}
\end{proofof}

The next result, the key to the proof of \clem{lem:mainl}(b), gives certain
monotonicity properties of $Y(K,s,\delta)$ and $W(K,s,\delta)$.

\begin{lem}\label{monotone} If $K$, $s$, and $\delta$ satisfy $0<K<1/27$ and
$0\le s,\delta\le\tau_K/6$, then:

 \smallskip\noindent
 (a) For fixed $K$ and $s$ the function $2\delta/3-Y(K,s,\delta)$
 (respectively $2\delta/3-W(K,s,\delta)$) is strictly increasing
 (respectively strictly decreasing) in $\delta$;

 \smallskip\noindent
 (b) For fixed $K$ and $\delta$ the functions $Y(K,s,\delta)$ and
$W(K,s,\delta$) are strictly increasing in $s$;

 \smallskip\noindent
 (c) For fixed $s$ and $\delta$ the function $Y(K,s,\delta)$
 (respectively $W(K,s,\delta)$) is strictly increasing
 (respectively strictly decreasing) in $K$.
\end{lem}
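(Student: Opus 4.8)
The plan is to treat the three parts by separate means, all resting on the structure of $y_K$ in \cprop{recall} — in particular the three‑term relation $y_K(t-\tau_K/3)+y_K(t)+y_K(t+\tau_K/3)=1$ of \cprop{recall}(a) together with the elementary consequences of $0<a(K)<1/3<b(K)$ (here $b=y_K(\tau_K/2)>y_K(\tau_K/3)=(1-a)/2>1/3$), namely $1-a>2/3$, $(1+a)/2<2/3$, $(1+b)/2>2/3$ and $1-b<2/3$. For (a), note first that $W(K,s,\delta)=Y(K,s+\tau_K/3,\delta)$ (shift the integration variable in the definition of $W$), so that with $\psi(\sigma,\delta):=y_K(\sigma+\delta)+y_K(\sigma-\delta)$ one has $\partial_\delta Y(K,s,\delta)=\psi(s,\delta)$ and $\partial_\delta W(K,s,\delta)=\psi(s+\tau_K/3,\delta)$; hence (a) is exactly the pair of inequalities $\psi<2/3$ on the square $R_1=[0,\tau_K/6]^2$ and $\psi>2/3$ on the rectangle $R_2=[\tau_K/3,\tau_K/2]\times[0,\tau_K/6]$ (the argument $s+\tau_K/3$ runs over $[\tau_K/3,\tau_K/2]$ as $s$ runs over $[0,\tau_K/6]$). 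Since $\partial_\sigma\psi=y_K'(\sigma+\delta)+y_K'(\sigma-\delta)$ and $\partial_\delta\psi=y_K'(\sigma+\delta)-y_K'(\sigma-\delta)$, a critical point of $\psi$ forces $y_K'(\sigma+\delta)=y_K'(\sigma-\delta)=0$; by \eqref{eq:osc} and \cprop{recall}(a) the zeros of $y_K'$ are precisely the integer multiples of $\tau_K/2$, and one checks in a line that this is impossible in the interior of $R_1$ or of $R_2$. So $\psi$ attains its extrema over these compact rectangles on the boundary, and on each of the eight edges the three‑term relation (applied to the smaller of the two shifted arguments) evaluates $\psi$ to be either $2\,y_K(\xi)$ or $1-y_K(\xi)$ for a point $\xi$ lying in $[0,\tau_K/6]$ or in $[\tau_K/3,\tau_K/2]$; using that $y_K$ is increasing on $[0,\tau_K/2]$ with the values $a,b,(1-a)/2,(1-b)/2$ of \cprop{recall}(b), one gets that on $\partial R_1$ always $\psi\le\max\{1-b,(1+a)/2\}<2/3$ and on $\partial R_2$ always $\psi\ge\min\{1-a,(1+b)/2\}>2/3$, which proves (a).

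Part (b) needs nothing new: for $\delta>0$, \cprop{recall}(c.i) says $Y(K,\cdot,\delta)$ is strictly increasing on $[0,\tau_K/2]$, and both $[0,\tau_K/6]$ (for $Y$) and $[\tau_K/3,\tau_K/2]$ (for $W$, which equals $Y(K,s+\tau_K/3,\delta)$) are subintervals of $[0,\tau_K/2]$. For (c), evenness of $y_K$ lets one write $Y(K,s,\delta)$ as $\int_0^{\delta-s}y_K+\int_0^{s+\delta}y_K$ when $s<\delta$, and as $\int_0^{s+\delta}y_K-\int_0^{s-\delta}y_K$ when $s\ge\delta$, with all upper limits in $[0,\tau_K/3]$ (a symmetric reduction about $\tau_K/2$ handles $W$). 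Thus (c) reduces to showing $y_{K_1}(t)>y_{K_2}(t)$ on $[0,\tau_K/3]$ whenever $K_2<K_1$, equivalently that $\int_0^{c}y_K$ is strictly increasing in $K$ for each fixed $c\in[0,\tau_K/3]$. On $[0,\tau_K/6]$ this is precisely \cprop{recall}(e.ii); to reach $c$ up to $\tau_K/3$ I would use the single‑crossing property \cprop{recall}(e.i) — $y_{K_1}-y_{K_2}$ is even and positive at $0$, so it changes sign exactly once on $[0,\tau_{K_2}/2]$ and it suffices that this sign change occur at a point $\ge\tau_{K_1}/3$, i.e.\ $y_{K_1}(\tau_{K_1}/3)\ge y_{K_2}(\tau_{K_1}/3)$ — or, probably more cleanly, argue as in the proof of \cprop{recall}(d): $\int_0^{c}y_K$ is continuous in $K$, tends to $0$ as $K\searrow0$ and to a positive limit as $K\nearrow1/27$, and is injective in $K$ by the uniqueness result (Theorem~6.1) of \cite{ACLMMS} already invoked there, hence monotone, hence increasing. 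The decrease of $W(K,s,\delta)$ in $K$ then follows from the mirror half of \cprop{recall}(e.ii) near the maximum of $y_K$, or from the identity $Y(K,s,\delta)+Y(K,\tau_K/3-s,\delta)+W(K,s,\delta)=2\delta$.

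The hard part is exactly this last step of (c). Parts (a) and (b) are local: they use only the three‑term relation and the monotone, single‑humped shape of $y_K$. Part (c) is global — $y_K(t)$ is not monotone in $K$ at a fixed $t$ (it increases near the minimum but, by \cprop{recall}(b),(e.ii), decreases near the maximum), and the period $\tau_K$ depends on $K$ as well, so differentiating the reduced integrals in $K$ produces competing contributions and the sign cannot be settled pointwise. Locating the crossing of $y_{K_1}$ and $y_{K_2}$ relative to $\tau_K/3$, or extracting the needed injectivity‑in‑$K$ from the results of \cite{ACLMMS}, is where the real work lies; I expect the final write‑up to take the latter route, in parallel with \cprop{recall}(d).
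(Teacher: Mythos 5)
Your parts (a) and (b) are fine. In (a) you replace the paper's direct derivative bound (which uses $y_K(s+\delta)\le(1-a)/2$, $y_K(s-\delta)\le(1-b)/2$ plus the algebraic inequality $(a+b)/2>1/3$, and for $W$ a symmetry-and-monotonicity reduction to $s=0$) by a no-interior-critical-point argument on the two rectangles followed by edge evaluations; the edge values are indeed of the form $2y_K(\xi)$ or $1-y_K(\xi)$ once evenness is combined with the three-term identity, so this route is valid and even avoids the explicit formula for $b$. Part (b) is the same one-line appeal to \cprop{recall}(c) as in the paper.

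Part (c), which you yourself flag as the real content, is where there is a genuine gap, and neither route you sketch can close it. The pointwise claim $y_{K_1}(t)>y_{K_2}(t)$ on all of $[0,\tau_{K_1}/3]$ for $K_2<K_1$ is false: near $K=1/27$ one has $y_K(t)\approx 1/3-A(K)\cos(2\pi t/\tau_K)$ with amplitude $A(K)\to0$ and $\tau_K\to\tau_*$, so for $K_2<K_1$ both close to $1/27$ the difference $y_{K_1}-y_{K_2}\approx\bigl(A(K_2)-A(K_1)\bigr)\cos(2\pi t/\tau_*)$ changes sign near $t\approx\tau_*/4<\tau_*/3$; in particular $y_{K_1}(\tau_{K_1}/3)<y_{K_2}(\tau_{K_1}/3)$, so the single crossing of \cprop{recall}(e.i) occurs \emph{before} $\tau_{K_1}/3$, not after. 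The fallback integral statement, that $\int_0^c y_K\,dt$ increases in $K$, is true but is just $\frac12 Y(K,0,c)$, i.e.\ \cprop{recall}(d) restated, and it does not suffice: for $s\ge\delta$ one has $Y(K,s,\delta)=F_K(s+\delta)-F_K(s-\delta)$ with $F_K(c)=\int_0^c y_K$, and monotonicity in $K$ of each $F_K(c)$ says nothing about this difference (monotonicity of all increments is equivalent to the false pointwise claim). Similarly, the identity $Y(K,s,\delta)+Y(K,\tau_K/3-s,\delta)+W(K,s,\delta)=2\delta$ cannot deliver the $W$ half from the $Y$ half, since the middle term is centered in $[\tau_K/6,\tau_K/3]$, outside the range where (c) applies and where, by the same near-critical computation, no pointwise or monotone control is available. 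The paper's proof of (c) is genuinely global: it assumes $Y(K_1,s_*,\delta)\le Y(K_2,s_*,\delta)$ for some $s_*\in[0,\tau_{K_1}/6]$, uses the single-crossing property (e.i) to show that $Y(K_2,s,\delta)-Y(K_1,s,\delta)$ is then nondecreasing in $s$ on $[s_*,\tau_{K_1}/6]$, and contradicts this with the endpoint inequality $Y(K_1,\tau_{K_1}/6,\delta)>Y(K_2,\tau_{K_1}/6,\delta)$, obtained from (e.ii) through $Y(K,\tau_K/6,\delta)=\delta-\frac12 W(K,\tau_K/6,\delta)$ together with monotonicity in $s$; the statement for $W$ is proved symmetrically. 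Some argument of this kind, combining the crossing structure with an endpoint comparison, is what your write-up still needs.
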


\begin{proof} (a) We rely throughout on \cprop{recall}(a,b).  From
$0\le s+\delta\le\tau_K/3$ and $-\tau_K/6\le s-\delta\le\tau_K/6$ it
follows that $y_K(s+\delta)\le(1-a)/2$ and $y_K(s-\delta)\le(1-b)/2$. Then
from \eqref{WD},
 \be\label{deriv1}
\frac{d}{d\delta}\left[\frac{2\delta}3-Y(K,s,\delta)\right]
    =\frac23-y_K(s+\delta)-y_K(s-\delta)\ge \frac{a+b}2-\frac13>0, 
 \ee
 as is easily verified from $b=[2-a-\sqrt{4a-3a^2}]/2$ with $0<a<1/3$.  To
show that $(d/d\delta)(2\delta/3-W(K,s,\delta))<0$ it suffices similarly to
verify that
 \be\label{deriv2}
z(K,s,\delta):=y_K\left(s+\frac{\tau_K}3+\delta\right) 
    +y_K\left(s+\frac{\tau_K}3-\delta\right)>\frac23. 
 \ee
 Because $y_K$ is even and $\tau_K$-periodic, $z$ is invariant under
$(s,\delta)\to (s',\delta')$ with $s'=\tau_K/6-\delta$,
$\delta'=\tau_K/6-s$, so that it suffices to verify \eqref{deriv2} for
$s+\delta\le\tau_K/6$, and since under this condition both terms in
$z(K,s,\delta)$ are increasing in $s$ it suffices to consider $s=0$.  But
 because $y_K$ is even, 
 \begin{eqnarray}
    z(K,0,\delta)&=&\frac12\Bigl[y_K\left(\frac{\tau_K}3+\delta\right) 
    +y_K\left(-\frac{\tau_K}3+\delta\right)\nonumber\\
  &&\hskip40pt 
     + y_K\left(-\frac{\tau_K}3-\delta\right)
    +y_K\left(\frac{\tau_K}3-\delta\right)\Bigr] \nonumber\\
    &=&1-\frac12\left[y_K(\delta)+y_K(-\delta)\right]\ge\frac{1+b}2>2/3.
 \end{eqnarray}

 \smallskip\noindent
 (b) See \cprop{recall}(c).

 \smallskip\noindent
 (c) The proofs for $Y$ and of $W$ are similar and we check only $Y$.
Suppose that $0<K_2<K_1<1/27$ and that for some $s_*\in[0,\tau_{K_1}/6]$,
 \be\label{eq:ineq3}
Y(K_1,s_*,\delta)\le Y(K_2,s_*,\delta).
 \ee
 Then certainly $y_{K_1}(t_*)<y_{K_2}(t_*)$ for some
$t_*\in[s_*-\delta,s_*+\delta]$, and since $y_{K_1}(0)>y_{K_2}(0)$,
\cprop{recall}(e.i) implies that $y_{K_1}(t)\le y_{K_2}(t)$ for
$t\in[t_*,\tau_{K_1}/3]$.  Then for $s\in[s_*,\tau_{K_1}/6]$,
 \bea
 \frac{d}{ds}\left[Y(K_2,s,\delta)-Y(K_1,s,\delta)\right]\hskip-130pt
    &&\nonumber\\
 &=& \left[y_{K_1}(s-\delta)-y_{K_2}(s-\delta)\right]
    +\left[y_{K_2}(s+\delta)-y_{K_1}(s+\delta)\right]\nonumber\\
   &>& 0,\label{eq:ineq4}
 \eea
 since both terms on the right had side are positive.  But for 
$0\le\delta\le\tau_{K_1}/6$,
 \be\label{eq:ineq1a}
      W(K_1,\tau_{K_1}/6,\delta)<W(K_2,\tau_{K_2}/6,\delta),
 \ee
 by \cprop{recall}(e.ii), and so from \cprop{recall}(a),
 \begin{eqnarray}\label{eq:ineq2}
 Y(K_1,\tau_{K_1}/6,\delta)
  &=&\frac12\int_{\tau_{K_1}/6-\delta}^{\tau_{K_1}/6+\delta}
   \left[y_{K_1}\left(t-\frac{\tau_{K_1}}3\right)+y_{K_1}(t)\right]\,dt
      \nonumber\\
  &=&\frac12\int_{\tau_{K_1}/6-\delta}^{\tau_{K_1}/6+\delta}
   \left[1-y_{K_1}\left(t+\frac{\tau_{K_1}}3\right)\right]\,dt\nonumber\\
  &=&\delta-\frac12 W(K_1,\tau_{K_1}/6,\delta)\nonumber\\
  &>&\delta-\frac12 W(K_2,\tau_{K_2}/6,\delta)\nonumber\\
  &=& Y(K_2,\tau_{K_2}/6,\delta)\nonumber\\
  &>& Y(K_2,\tau_{K_1}/6,\delta) 
 \end{eqnarray}
 since $\tau_{K_1}<\tau_{K_2}$ (see \crem{rem:yk}), contradicting
\eqref{eq:ineq3} and \eqref{eq:ineq4}.  \end{proof}

\begin{proofof}{\clem{lem:mainl}(b)} For type 1 solutions we
have from \eqref{eq:lax1} that 
 \be\label{eq:newbcc}
\lambda_B
  =\frac{\delta}{3}-\frac12 Y(K,s_B,\delta),\qquad
 \lambda_A
  =\frac{\delta}{3}-\frac12 W(K,s_B,\delta),
 \ee
 with $\delta=\tau_K/6-\beta>0$ and, by \crem{rem:order},
$0\le s_B\le \tau_K/6$.  Thus the existence for some $\lambda$ of two
type~1 solutions would correspond to the existence of $(K_1,s_1)$ and
$(K_2,s_2)$ with $0<K_2<K_1<1/27$ and $0\le s_i\le\tau_{K_i}/6$, $i=1,2$,
such that $2\delta_1/3-Y(K_1,s_1,\delta_1)=2\delta_2/3-Y(K_2,s_2,\delta_2)$
and $2\delta_1/3-W(K_1,s_1,\delta_1)=2\delta_2/3-W(K_2,s_2,\delta_2)$,
where  $\delta_i=\tau_{K_i}/6-\beta$ for $i=1,2$.  Then from
\clem{monotone}(a,c),
 \begin{eqnarray}
 \frac{2\delta_2}3-Y(K_2,s_2,\delta_2) 
  &=&  \frac{2\delta_1}3-Y(K_1,s_1,\delta_1)\nonumber\\
  &<&  \frac{2\delta_1}3-Y(K_2,s_1,\delta_1)\nonumber\\
  &<&  \frac{2\delta_2}3-Y(K_2,s_1,\delta_2),
 \end{eqnarray}
 so that  \clem{monotone}(b) implies that  $s_1<s_2$.  But also 
 \begin{eqnarray}
 \frac{2\delta_2}3-W(K_2,s_2,\delta_2) 
  &=&  \frac{2\delta_1}3-W(K_1,s_1,\delta_1)\nonumber\\
  &>&  \frac{2\delta_1}3-W(K_2,s_1,\delta_1)\nonumber\\
  &>&  \frac{2\delta_2}3-W(K_2,s_1,\delta_2),
 \end{eqnarray}
 implying that $s_1>s_2$, a contradiction.
\end{proofof}

\end{document}